\newtheorem{pro}{Proposition}
\newcommand\blfootnote[1]{%
  \begingroup
  \renewcommand\thefootnote{}\footnote{#1}%
  \addtocounter{footnote}{-1}%
  \endgroup
}
\DeclareMathOperator*{\argmin}{argmin}
\def\BibTeX{{\rm B\kern-.05em{\sc i\kern-.025em b}\kern-.08em
    T\kern-.1667em\lower.7ex\hbox{E}\kern-.125emX}}
\begin{document}
\history{Received July 17, 2024, Accepted July 24, 2024, date of current version July 24, 2024.\\
This work has been submitted to the IEEE for possible publication. Copyright may be transferred without notice, after which this version may no longer be accessible.}
\doi{10.1109/ACCESS.2024.3434665}

\title{
Guided-Mutation Genetic Algorithm for Mobile IoT Network Relay 
}
\author{
	\uppercase{Gyupil Kam}\authorrefmark{1} \IEEEmembership{Member, IEEE} 
	and \uppercase{Kiseop Chung}\authorrefmark{2} \IEEEmembership{Member, IEEE}
}
\address{Agency for Defense Development, Daejeon 34186, Republic of Korea (e-mail: halogen363@gmail.com\authorrefmark{1}, cks991030@snu.ac.kr\authorrefmark{2})}
\tfootnote{This work was supported by the Agency for Defense Development Grant funded by the Korean Government (2024).}

\markboth
{G.P. Kam \headeretal: Guided-Mutation Genetic Algorithm for Mobile IoT Network Relay}
{G.P. Kam \headeretal: Guided-Mutation Genetic Algorithm for Mobile IoT Network Relay}

\corresp{Corresponding author: Kiseop Chung}

\begin{abstract}

The Internet of Things (IoT) is a communication scheme which allows various objects to exchange several types of information, enabling functions such as home automation, production management, healthcare, etc.
In addition, energy-harvesting (EH) technology is considered for IoT environment in order to reduce the need for management and enhance maintainability. 
Moreover, since environments considering outdoor elements such as pedestrians, vehicles and drones have been on the rise recently, it is important to consider mobility when designing an IoT network management scheme.
However, calculating the optimal relaying topology is considered as an NP-hard problem, and finishing computation for mobility environment before the channel status changes is important to prevent delayed calculation results. 
In this article, our objective is to calculate a sub-optimal relaying topology for stationary and mobile system within reasonable computation time. 
To achieve our objective, we validate an iterative balancing time slot allocation algorithm introduced in the previous study, and propose a guided-mutation genetic algorithm (GMGA) that modulates the mutation rate based on the channel status for rational exploration. 
Additionally, we propose a mobility-aware iterative relaying topology algorithm, which calculates relaying topology in a mobility environment using an inheritance of the sub-optimal relaying topology calculations.
Simulation results verify that our proposed scheme effectively solves formulated IoT network problems compared to other conventional schemes, and also effectively handles IoT environments including mobility in terms of minimum rate budget and computation time.

\end{abstract}

\begin{keywords}
Genetic algorithm, Mutation rate modulation, Internet of Things, Mobility, TDMA system, Energy harvesting, Relay.
\end{keywords}
\titlepgskip=-15pt

\maketitle

\section{Introduction}
\label{sec:introduction}

\blfootnote{
The preprint of this article may be found in:\\ 
http://arxiv.org/abs/2404.01683
}

The Internet of Things (IoT) has opened up a world of possibilities in several areas such as smart grid, home automation, smart transportation, etc \cite{Khanna20}. 
With an ability for distant things to exchange duplex information in a wide area, the limitations of distance and time for various fields have been reduced.
In addition, IoT facilitates convenient access and interaction with a diverse range of devices, including home appliances, surveillance cameras, actuators, sensors, and vehicles. 
This capability encourage the creation of numerous applications that leverage the vast and varied data produced by these devices to offer new services to individuals, businesses, and government entities \cite{Kirimtat20}.

One of the applications of IoT is the smart cities, which can use public resources more efficiently and improve the quality of services for citizens while reducing operational costs through device connectivity \cite{Zanella14}. 
The concept of smart cities solves the challenges related to transportation, environment, and social cohesion occurring from the increase in urban population and complexity of city.
Thus, smart cities are expected to become a promising business in the near future \cite{Yang21}.

While maintaining such a system, recharging the batteries of large-scale battery-based wireless sensors requires substantial manpower. 
Also, battery has limitations by battery power or capacity to support lifetime.
Moreover, battery cannot be replaced in an active network or during the data processing phase \cite{Ahmad22}. 
In order to handle the problem related to batteries, an energy harvesting (EH) technology provides an effective power management methodology in terms of usability and resilience, using power beacons (PBs) to supply energy to network nodes by transmitting solar or RF energy \cite{Sanislav21}.
However, since the energy distribution and the position of nodes may be uneven under the variety of distances between the nodes and the PBs, finding the optimal relaying topology under these circumstances is complicated.

Moreover, since mobile nodes such as vehicles continuously alter positions over time, recalculations of relaying topology according to these changes are necessary.
Otherwise, the relaying topology does not properly reflect the channel characteristics of the network, eventually resulting in a deterioration in the overall performance of the network \cite{Kim20}. 
Given these challenges, finding the optimal topology for this kind of problem is NP-hard and requires extensive computation time \cite{Zhang20}.

However, despite the NP-hardness of the task, determining the relaying topology is essential for managing the rate budget of system because relaying can significantly impact the overall rate budget, either positively or negatively.
In addition, time slot allocation is also important for time division multiple access (TDMA) system, a technology that divides the frequency bands available in a network into time slots to allow multiple users to share the same transmission medium while avoiding interference \cite{Benrebbouh21}. 
Therefore, optimizing the relaying topology and allocating time slots effectively is crucial for maximizing the rate budget of the system.

Besides, several methods such as genetic algorithms have been applied in order to solve various meta-heuristic problems. 
A genetic algorithm which consists of crossover, mutation, evaluation, and selection process, is an appropriate tool to solve NP-hard problems. 
Through different modulations, genetic algorithm offers a wide range of benefits and can address various types of meta-heuristic challenges \cite{Katoch21}. 
Therefore, by adapting the genetic algorithm specifically to our problem, a sub-optimal rate budget within a minimal computation time could be achieved.

In this paper, we propose a guided-mutation genetic algorithm (GMGA) including modulation of mutation process, and mobility-aware iterative relaying topology algorithm to derive a sub-optimal relaying topology by solving the formulated problem under IoT network and EH environment. 
To establish the theoretical foundation of previous research, we prove the validity of the iterative balancing time slot allocation algorithm, which was proposed in previous research in order to maximize the minimum rate budget of a node in the system by adjusting the time slot allocation.
Numerical analysis confirms that not only our proposed methodology outperforms than conventional applicable methods for our formulated problem, but also our proposed methodology is applicable for a real-world implementation considering mobility.

\subsection{Contributions and Organization}

The main contributions of this paper are threefold:

\begin{itemize}
    \item First, we prove that for the Iterative Balancing time slot allocation algorithm which allocates time slots in order to maximize minimum rate budget (bits/Hz), equivalence of maximum and minimum bits/Hz equals to the maximization of minimum bits/Hz using dual problem and KKT conditions.  
    \item Second, we propose a guided-mutation genetic algorithm (GMGA), which effectively handles both scalability and non-linearity of the formulated IoT environment based on EH.
    By mutating the node connections with a probability based on the cost of each links, our GMGA shows superior performance results than conventional algorithm in less computation time. 
    \item Third, we propose a mobility-aware iterative relaying topology algorithm to handle environments considering mobility, which reduces repetitive calculations and derives sub-optimal topology that alters with movement of the nodes, thus enhancing adaptability into the real-world environments such as outdoor transportation.
\end{itemize}

The remainder of this paper consists of the following.
In section II, we review the previous research related to main theme of this paper, which consists of previously proposed network relaying schemes and its methodology, along with GA and its modulation.
In Section III, we establish the system model and formulate an IoT optimization problem, which is a main problem in this paper. 
In Section IV, we prove the validity of IB time slot allocation algorithm and propose a GMGA in order to calculate relaying topology.
In addition, we propose a mobility-aware iterative relaying topology algorithm based on the formulated system model. 
In Section V, we provide a detailed explanation of simulation parameters and conduct simulations to confirm the sub-optimality of our proposed scheme by comparing performance with other conventional schemes. 
Finally, in Section VI, we provide the key findings of our research and conclude the paper with a short summary. 

\section{Related Works}

Recent studies have focused on developing relaying schemes to calculate the optimal relaying topology.
In \cite{Hung20}, an energy-efficient relaying scheme for wireless sensor networks (WSNs) was explored, emphasizing relaying methods in scenarios where event detection occurs among numerous nodes. 
Additionally, \cite{Ghazi17} devised a `Teaching Learning' based optimization algorithm to navigate the optimal path through a selection of nodes for relaying. 
\cite{Xu23} presents a novel relay selection strategy, named the WF strategy, utilizing Markov models and queuing theory to enhance transmission efficiency and performance in WSNs by selecting optimal relay nodes based on multimedia service characteristics. 
The simulation results demonstrate that the WF strategy outperforms existing weighted round robin and random selection strategies, achieving better system performance metrics such as throughput and packet loss.

Furthermore, there is also some studies applying machine learning on relaying schemes.
\cite{Vu22} presents in-depth analysis of the performance of Non-Orthogonal Multiple Access (NOMA) in IoT applications, particularly focusing on wireless powered cognitive relay networks. 
The analysis is dedicated to various deep learning techniques that predict and optimize the performance of the networks. 
\cite{Mao17} attempted to find a relaying topology using a supervised learning, which is actually inefficient in complex scenarios. 
Since calculating relaying topology is an NP-hard problem, application of supervised learning is often limited due to the requirement for extensive labeled training data.
To overcome this issue, \cite{Chung23} calculated a sub-optimal relaying topology using variational autoencoder (VAE) with novel evaluation method which does not require a labeled training dataset, and proposed iterative balancing time slot allocation algorithm.
The VAE scheme proposed in \cite{Chung23} applies both training phase and inference phase for each node distribution, making it inseparable these two phases. 
In other words, instead of using the inference phase of a pre-trained model, the VAE scheme uses both the training phase and inference phase to obtain the relaying topology of a single node distribution.
Thus, the VAE scheme reached its limit due to time-consuming nature of backward propagation included in the training phase.

There are also other studies to solve the problem through reinforcement learning \cite{Xu21}. 
However, the lack of proven convergence and the extensive time required to discover the sub-optimal path remain significant challenges \cite{Mammeri19}. 
In addition, since IoT network has the characteristic of low transmission frequency, using reinforcement learning for such IoT network results in very low frequency of reward, which hinders applying reinforcement learning to such network. 
Moreover, the reinforcement learning approach necessitates using the imperfect topology in the physical network during the initial learning phase, which inevitably leads to a practical performance degradation.

In addition to research on stationary nodes, research on mobile nodes has also been conducted in WSNs. 
In \cite{Piltyay20}, the connectivity within heterogeneous 5G mobile systems is investigated with an analysis of how connectivity changes based on the characteristics of mobile nodes and network parameters. 
In order to investigate connectivity, they proposed three algorithms for the cluster analysis of WSNs:  k-means, algorithm of the foreign element, fuzzy C-means algorithm. 
\cite{Haseeb19} studied an intrusion prevention framework to improve secure routing in mobile IoT environments using WSN, focusing on network lifetime, data reliability, and security. 
\cite{Soltani20} proposed a random waypoint model incorporating random orientation of user equipment, applied to analyze handover rates in indoor LiFi networks.
\cite{Sadrishojaei21} introduces a novel routing method for the mobile IoT, focusing on energy-efficient clustering and predictive location routing with multiple mobile sinks, which significantly improves energy consumption and network throughput. 
Through simulations, the method demonstrates at least a 28.12\% reduction in energy usage and a 26.74\% increase in throughput compared to existing energy-efficient routing algorithms.
These studies on mobile nodes have not proceeded with the relaying topology due to the considerable amount of required computation resources. 
In summary, despite the development of various schemes, persistent limitations have necessitated ongoing research in this area.

In connectivity-related issues, genetic algorithm (GA) is frequently employed across various applications within the telecommunications division.
\cite{Tao18} researched network security for intrusion detection using GA and support vector machines (SVMs), demonstrating a higher detection rate compared to other SVM-based intrusion detection algorithms.
\cite{Zhang19} proposed a novel method combining an improved genetic algorithm with a deep belief network (DBN) to enhance intrusion detection in IoT environments. 
GA has also been applied for ad-hoc optimization, as seen in \cite{Pan21}, where GA was used for UAV path planning to identify the optimal UAV relay, and as seen in \cite{Hanh19}, which applied GA to solve the area coverage problem in WSN. 
\cite{Khadir22} represents a case where GA was utilized to efficiently compute dynamic QoS parameters in IoT systems concerning computation time and optimality. 
Furthermore, \cite{Bouzid20} developed a multi-objective wireless network optimization using the genetic algorithm (MOONGA) to propose a method for identifying the optimal node positions in fixed relaying topology considering multiple objectives such as sensing coverage, connectivity, lifetime, energy consumption, and cost. 
However, node locations are often constrained by practical needs or geographical limitations in reality, necessitating algorithms that calculate the relaying topology while holding the node positions fixed.
\cite{Birtane24} discusses the application of the Vibrational Genetic Algorithm (VGA) for optimizing the deployment of heterogeneous sensor nodes in irregularly shaped areas of wireless sensor networks. 
VGA enhances the placement strategy by applying small, random vibrations to the positions of nodes during the optimization process, which helps in avoiding local optima and improves coverage in complex terrains. 
The algorithm's effectiveness is demonstrated through simulations showing improved coverage and connectivity compared to traditional methods.

Since the method of applying GA to a problem varies depending on the problem, special modulations are applied to GA. 
\cite{Song18} applied ranking group selection, direction-based crossover, and normal mutation to enhance the search capabilities of algorithm and population diversity, thereby improving solutions to complex optimization problems. 
\cite{Wang23} increased search efficiency for continuous constrained optimization problems through two-direction crossover and grouped mutation. 
\cite{Vlasic19} experimentally demonstrated that scheduling problems could be solved using dispatching rules for population initialization.
NSGA-II (Nondominated Sorting Genetic Algorithm II) used in \cite{Guo23} is employed to optimize the efficiency and volume of magnetic components in DAB converters by iteratively applying genetic operations and nondominated sorting to evolve a population towards a set of Pareto optimal solutions, which balance trade-offs between efficiency and volume. 
However, these modulations are limited to the problems proposed in each paper, and to solve our connectivity problem, a modulated GA with reduced computation time and for discrete values is necessary. 

Based on the mentioned works summarized in Table \ref{schemes}, a methodology that calculates relaying topology of stationary nodes or mobile nodes with a sufficient rate budget and a fast computation time has yet to be proposed. 
However, various studies on GA have shown the potential for applying GA in IoT environments, and have achieved high performance through appropriate modulation. 
Therefore, we aim to calculate the relaying topology based on environments with stationary or moving nodes using a genetic algorithm with mutation rate modulation, and evaluate the performance in terms of rate budget and computation time in this paper.

\begin{table*}[]
\centering
\setlength{\tabcolsep}{10pt}
\renewcommand{\arraystretch}{1.3}
\small
\begin{tabular}{cc}
\Xhline{3\arrayrulewidth}
Algorithms & Explanation \\
\Xhline{3\arrayrulewidth}
Supervised deep belief architectures {\cite{Mao17}} & Using ANN based on supervised learning method \\
VAE scheme {\cite{Chung23}} & Using variational autoencoder with PT-EVM \\
Multi-agent DRL methods {\cite{Xu21}} & Using reinforcement learning with multi agent \\
MOONGA {\cite{Bouzid20}} & Using GA for the optimization of location of nodes \\
VGA {\cite{Birtane24}} & Using GA with vibrations to the positions of nodes \\
\Xhline{3\arrayrulewidth}
\end{tabular}
\caption{Summary of relaying topology schemes in recent works.}
\label{schemes}
\end{table*}

\section{System Model and Problem Formulation}

\begin{figure}[t]
	\centering
	\includegraphics[width=0.90\linewidth]{./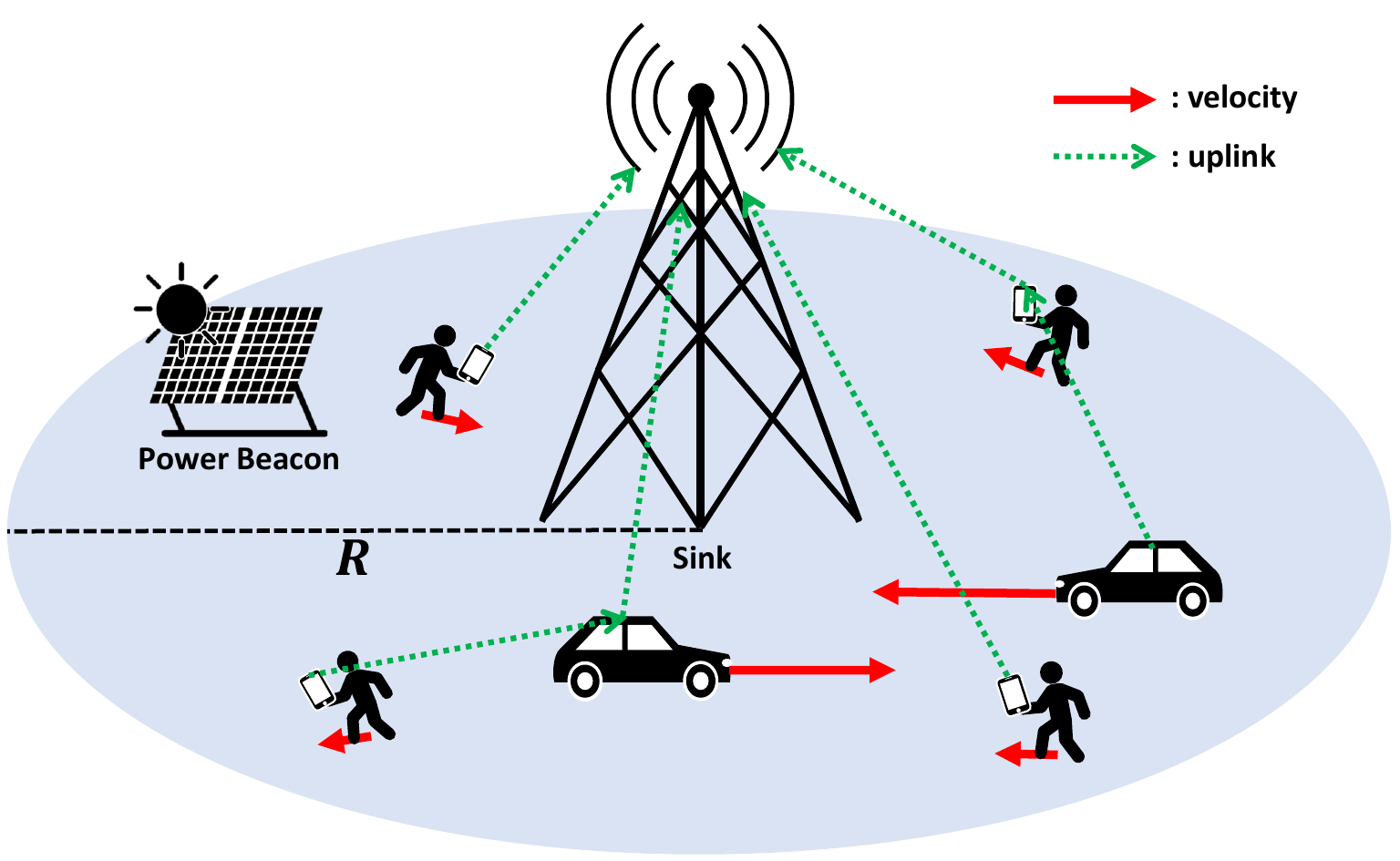} 
	\caption{System model example when $N_d=6$, $N_b=1$.} 
	\label{system_model}
 \vspace{-0.5 cm}
\end{figure}

\subsection{System Model}

This subsection describes a system model.
The basic configuration of the system model is inspired by approach of \cite{Chung23}.
In our system model, we consider $N_d$ IoT devices (nodes) and $N_b$ power beacons (PBs) which are uniformly distributed within a circle with a radius R, centered around a single data packet destination (packet sink). 
By labeling the nodes as 1,2,...,$N_d$ and the sink as $N_{d}+1$, the set associated with the nodes is $\mathbb{N}_d = \{1,2,...,N_d\}$ and the set associated with the overall communicating nodes is $\mathbb{N} = \{1,2,...,N_{d}+1\} = \mathbb{N}_{d} \cup \{N_{d}+1\}$, respectively.

Furthermore, since TDMA system has abilities of high energy efficiency and an on-demand feature, we consider time division multiple access (TDMA) system in our system model.  
Those abilities are necessary for IoT devices operating with energy harvesting (EH) and offer flexibility in resource allocation for various devices \cite{Piyare18}. 
In our TDMA-based system model, each node transmits data to the higher node defined by relaying topology in assigned time slot within time frame $T$ on the basis of TDMA system.

The time frame $T$ in the TDMA system is divided into $N_d$ time slots, and the time slot assigned to a node $i$ is denoted as $t_i$ ($i \in \mathbb{N}_d$).
The time slot $t_i$, allocated for node $i$ to send the data to the upper node, is defined by the following conditions:
\begin{align}
\sum_{i\in\mathbb{N}_d}{t_i}=T,\,\, t_i\in (0,T)\,\, \forall i\in\mathbb{N}_d.
\end{align}

Additionally, EH based on radio frequency (RF) is considered to supply energy to the nodes in our system model. 
For an RF power source, the harvested energy delivered to a \textit{$k$}th node is $E_{eh,k}=T\sum_{n\in\mathbb{N}_b}{P_b |h_{n,k}|^2 d^{-\alpha}_{n,k}}$, where $P_b$ is power transmitted by PB, $h_{n,k}$ is a channel between \textit{$n$}th PB and \textit{$k$}th node following $\mathcal{CN}(0,1)$, $d_{n,k}$ is distance between \textit{$n$}th PB and \textit{$k$}th node, and $\alpha$ is a path loss exponent, respectively. 
Note that the network model is based on a first-order radio model without consideration of circuit power consumption, as it will only result in a slight difference in the degree of improvement in our proposal.
Also, clock synchronization might be implemented by allocating fixed part of time slot and utilizing it to exchange the information about clock synchronization. 
However, we assume clock is synchronized in this paper for simplicity of simulation. 

Assuming that all the energy delivered to the \textit{$k$}th node is consumed for transmission of data within its time slot, the amount of rate budget (bits/Hz), $r_{k}(\mathbf{c},\mathbf{t})$, of \textit{$k$}th node transmitting to a parent node under TDMA system is given by the following according to Shannon's capacity:

\begin{align}
    r_{k}(\mathbf{c},\mathbf{t})&=\sum_{n\in\mathbb{N}}{c_{k,n}t_{k}} \log_{2}{(1+\frac{E_{k}|h_{k,n}|^2 d_{k,n}^{-\alpha}}{t_k N})}, \\
    c_{k,n}&=
    \begin{aligned}
        \begin{cases}
            1, \quad \mbox{if}\,\, n\text{th node is parent of } k\text{th node}, \\
            0, \quad \text{otherwise}.
        \end{cases}
    \end{aligned}
\end{align}
where $\mathbf{t}$ is an allocated time slot vector, $E_k$ is energy of \textit{$k$}th node by EH which is proportional to $E_{eh,k}$, and $N$ is noise power, respectively. 
However, since it is necessary to transmit not only own data but also data from child nodes, the amount of bits/Hz budget for transmitting data of \textit{$k$}th node denoted as $R_k(\mathbf{c},\mathbf{t})$ is as follows:
\begin{align}
    R_k(\mathbf{c},\mathbf{t})=\sum_{n\in\mathbb{N}}{c_{k,n}t_{k}} \log_{2}{(1+\frac{A_{k,n}}{t_k})} \nonumber\\
    -\sum_{n'\in\mathbb{N}_d}{c_{n',k}t_{n'}} \log_{2}{(1+\frac{A_{n',k}}{t_{n'}})}, \label{f_calculate}
\end{align}
where $A_{k,n}=E_{k}|h_{k,n}|^2 d_{k,n}^{-\alpha}/{N}$ and the signal to noise ratio (SNR) is $\Gamma_{k,n}=A_{k,n}/t_k$ when transmitting node is \textit{$k$}th node and receiving node is \textit{$n$}th node. 

For a simulation of mobile nodes, we assume that nodes are moving within the circle according to the Random Waypoint Model (RWM) which was adopted in mobile Ad-hoc network (MANET) simulation of \cite{Naik19}. 
Note that in our simulation, PBs are stationary; however, applying mobility to them is not difficult and does not critically impact the simulation results.
Since mobility models do not consider whether the waypoint of the next step is inside the boundary of the circle, movements derived from some mobility models could violate the boundary condition.
However, RWM is free from this consideration due to the convexity of the circle, resulting the derived waypoint of the inbound point from RWM is also inbound.

Moreover, we define a unit mobility simulation time $T_u$ which corresponds to the update period of node positions. 
In our system model, $T_u$ is defined as an integer multiple of $T$ ($T_u=kT$, $k \in \mathbb{Z}^{+}$).
Furthermore, all nodes move at a constant speed $v_c$ within the given circle with a radius $R$. 
In the real-world, nodes might have varying speeds over time, but such consideration on our system model causes only minor impact in this paper.

\subsection{Problem Formulation}
In this subsection, we formulate our target problem based on the system model defined in the subsection above. 
Our objective is to maximize $R_k(\mathbf{c},\mathbf{t})$ while each node transmits data to the sink within a limited time frame, $T$ in the system model. 
In order to increase the bits/Hz in a systematic aspect rather than an individual aspect, it is important to enhance the capacities for all nodes since the minimum $R_k(\mathbf{c},\mathbf{t})$ limits overall performance of the system. 
Therefore, the main objective to enhance the performance of the system model is to maximize the minimum value among all $R_k(\mathbf{c},\mathbf{t})$ of nodes.

However, since $R_k(\mathbf{c},\mathbf{t})$ takes into account both time slot $\mathbf{t}$ and connectivity $\mathbf{c}$, we include constraints on $\mathbf{t}$ and $\mathbf{c}$ into the main problem.
In our problem, $\mathbf{t}$ corresponds to the allocated time slots within the time frame T, where the sum of allocated time slots for all nodes equals to T.
Additionally, in our system model, every node has a single uplink route connected to the sink.
Considering all these constraints, our final formulated problem is as follows: 

\begin{align}
   \text{(P1)}  \quad \max_{\mathbf{c},\mathbf{t}} \, &\min_{k} \quad R_{k}(\mathbf{c},\mathbf{t})\label{p1}\\
\textrm{s.t.} \quad & \sum_{n \in \mathbb{N}_d} t_n =T, \label{c1}\\
  & \sum_{n'\in \mathbb{N}} C_{n,n'} =1\,\, \forall n \in \mathbb{N}_d,\label{c2}\\
  & (\mathbf{C}^{N_d})_{n,N_d+1} = 1\,\, \forall n \in \mathbb{N}_d \label{c3} .
\end{align}

where $\mathbf{C}$ is extended version of $\mathbf{c}$ satisfying 
\begin{align}
    &\mathbf{C}_{n,n'}=\mathbf{c}_{n,n'}\quad \forall n \in \mathbb{N}_d,\, n' \in \mathbb{N},\label{C1} \\
    &\mathbf{C}_{N_d+1,n'} = 0 \quad \forall n' \in \mathbb{N}_d, \label{C2}\\
    &\mathbf{C}_{N_d+1,N_d+1} = 1. \label{C3}
\end{align}

Then, (\ref{C1})-(\ref{C3}) provides a mathematical approach to connection, while (\ref{c2}) ensures that each node has a single uplink to a parent node, and (\ref{c3}) guarantees that there are no cycles in the system, and every node ultimately transmits data to the sink.

However, finding ($\mathbf{c}, \mathbf{t}$) which maximizes minimum of $R_{k}(\mathbf{c},\mathbf{t})$ is challenging due to following reasons: First, the presence of logarithmic operations in Shannon's capacity. Second, the NP-hardness of finding the optimal $\mathbf{c}$ since the number of possible candidate of $\mathbf{c}$ is $(N_d+1)^{N_d-1}$ in our formulated problem.

In this paper, we address each challenge from three aspects. 
First, an Iterative Balancing (IB) time slot allocation algorithm is used in order to calculate optimal $\mathbf{t}$ in given $\mathbf{c}$ \cite{Chung23}.
We also prove a validity of IB time slot allocation algorithm to increase the reliability of our formulated problem. 
Second, we propose a GMGA method to calculate sub-optimal solution of $\mathbf{c}$ in reasonable computation time.
Finally, we propose a mobility-aware iterative relaying topology algorithm by applying mobility in our system model in order to consider the real-world application.


\section{Proposed Methodology}

In this section, we propose a GMGA in order to calculate sub-optimal solution of $\mathbf{c}$, which modulates the mutation rate of GA based on a cost of each links.
We also propose a mobility-aware iterative relaying topology algorithm in order to handle wireless channel status changing over time without repetitive calculations.
It uses the final result from a prior simulation frame in the current simulation frame.
Moreover, in IB time slot allocation algorithm, we prove that equivalence of maximum $R_k(\mathbf{t})$ and minimum $R_k(\mathbf{t})$ achieves the maximization of minimum $R_k(\mathbf{t})$ with given $\mathbf{c}$ in order to strengthen the theoretical foundation.

\subsection{Proof of Validity for Iterative Balancing Time Slot Allocation Algorithm}
In this subsection, we describe IB time slot allocation algorithm assigning time slot to each node from the restricted time source $T$, regarding TDMA system.
When $\mathbf{c}$ is given, the IB time slot allocation algorithm works by allocating time $\Delta$ which is subtracted from a node with maximum $R_k(\mathbf{t})$, to time slot of node whose $R_k(\mathbf{t})$ is minimum, while maintaining the sum of time slots same. 
After finishing one iteration of time slot allocation, $\Delta$ is substituted for $\Delta/2$.
The convergence of the algorithm which is the difference between maximum $R_k(\mathbf{t})$ and minimum $R_k(\mathbf{t})$ converges to value smaller than $\epsilon_1$ is proven in \cite{Chung23}.

However, although decrease of difference between maximum and minimum ensures the equilibrium of $R_k(\mathbf{t})$, it does not guarantee maximization of minimum $R_k(\mathbf{t})$. 
In this paper, we prove that maximization of minimum $R_k(\mathbf{t})$ with constraints (\ref{c1}), (\ref{c2}), (\ref{c3}) is achieved when $R_1(\mathbf{t})=R_2(\mathbf{t})=...=R_{N_d}(\mathbf{t})$.
The proof of this equivalence is given as follows.

\begin{pro}
Let us define the amount of \rm bits/Hz \it the \textit{$k$}th node itself can transmit as $R_k(\mathbf{t})$ with given $\mathbf{c}$ and $k^*=\argmin\limits_{k}{R_k(\mathbf{t})}$. 
Then, a condition to maximize $R_{k^*}(\mathbf{t})$ where $t_1+t_2+...+t_{N_d}=T$ is $R_1(\mathbf{t})=R_2(\mathbf{t})=...=R_{N_d}(\mathbf{t})$. 
\end{pro}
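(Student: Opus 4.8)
The plan is to exploit the tree structure forced by constraints (\ref{c2})--(\ref{c3}): with $\mathbf{c}$ fixed, the connectivity is a directed tree rooted at the sink $N_d+1$, so each $k\in\mathbb{N}_d$ has a unique parent $p(k)$, a (possibly empty) child set $\mathrm{ch}(k)$, and a subtree of descendants $T_k$ (with $k\in T_k$). Writing $g_k(t):=t\log_2\!\big(1+A_{k,p(k)}/t\big)$, the first sum in (\ref{f_calculate}) depends only on $t_k$ and the second only on $\{t_j:j\in\mathrm{ch}(k)\}$, so $R_k(\mathbf{t})=g_k(t_k)-\sum_{j\in\mathrm{ch}(k)}g_j(t_j)$. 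A one-line calculus check gives $g_k'(t)=\tfrac{1}{\ln 2}\big[\ln(1+u)-\tfrac{u}{1+u}\big]>0$ for $u=A_{k,p(k)}/t>0$, so each $g_k$ is strictly increasing on $(0,\infty)$ with $g_k(0^+)=0$ and range $(0,\,A_{k,p(k)}/\ln 2)$; in particular $g_k^{-1}$ is available on that range.

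The central observation I would establish first is a telescoping identity over subtrees: in $\sum_{m\in T_k}R_m(\mathbf{t})$ every term $g_j(t_j)$ with $j\in T_k\setminus\{k\}$ occurs once with a plus sign (inside $R_j$) and once with a minus sign (inside $R_{p(j)}$, since $p(j)\in T_k$), leaving
\begin{equation}
\sum_{m\in T_k}R_m(\mathbf{t})=g_k(t_k)\qquad\forall\,k\in\mathbb{N}_d ,
\end{equation}
together with the recursion $|T_k|=1+\sum_{j\in\mathrm{ch}(k)}|T_j|$. I then write down the candidate optimizer explicitly: let $\Phi(s):=\sum_{k\in\mathbb{N}_d}g_k^{-1}\!\big(s\,|T_k|\big)$, which is strictly increasing on its natural domain with range $(0,\infty)$, put $\bar s:=\Phi^{-1}(T)>0$ and $\hat t_k:=g_k^{-1}(\bar s\,|T_k|)$. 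Then $\sum_k\hat t_k=\Phi(\bar s)=T$ and each $\hat t_k\in(0,T)$, so $\hat{\mathbf{t}}$ is feasible, and substituting into $R_k=g_k(t_k)-\sum_{j\in\mathrm{ch}(k)}g_j(t_j)$ gives $R_k(\hat{\mathbf{t}})=\bar s|T_k|-\bar s\sum_{j\in\mathrm{ch}(k)}|T_j|=\bar s$ for every $k$ — i.e. all $R_k$ coincide at $\hat{\mathbf{t}}$.

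It remains to show $\hat{\mathbf{t}}$ is optimal and is the unique optimizer. For an arbitrary feasible $\mathbf{t}$ put $s=\min_k R_k(\mathbf{t})$; if $s\le 0$ then $s<\bar s$ and we are done, while if $s>0$ the telescoping identity gives $s\,|T_k|\le\sum_{m\in T_k}R_m(\mathbf{t})=g_k(t_k)$, hence $t_k\ge g_k^{-1}(s\,|T_k|)$ for every $k$ by monotonicity of $g_k$; summing, $T=\sum_k t_k\ge\Phi(s)$, so $s\le\Phi^{-1}(T)=\bar s$. Thus $\max_{\mathbf{t}}\min_k R_k(\mathbf{t})=\bar s$, attained at $\hat{\mathbf{t}}$; and for any maximizer $\mathbf{t}^\star$ the inequalities $t_k^\star\ge g_k^{-1}(\bar s|T_k|)=\hat t_k$ sum to the equality $\sum_k t_k^\star=T=\sum_k\hat t_k$, forcing $t_k^\star=\hat t_k$ for all $k$ and hence $R_1(\mathbf{t}^\star)=\dots=R_{N_d}(\mathbf{t}^\star)$, which is precisely the stated condition.

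There is no serious obstacle on this route; the only care needed is the domain bookkeeping for the inverses (checking that $\bar s|T_k|$ stays inside $(0,A_{k,p(k)}/\ln 2)$, so that each $\hat t_k$ is finite and lies in $(0,T)$) and the trivial case $N_d=1$, where the statement is vacuous. A Lagrangian/KKT treatment of the epigraph form $\max\{s:\ R_k(\mathbf{t})\ge s\ \forall k,\ \sum_k t_k=T\}$ is an alternative: stationarity makes $g_k'(t_k^\star)$ times a difference of multipliers constant along tree edges, and telescoping those multipliers up to the sink shows the active set of the $\min$-constraint cannot be a proper subset of $\mathbb{N}_d$. I would nonetheless prefer the direct argument above, since it additionally yields uniqueness of the optimal time allocation.
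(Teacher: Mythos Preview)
Your argument is correct and takes a genuinely different route from the paper. The paper reformulates the max--min as a constrained minimization, writes the Lagrangian, and argues via KKT: from $\nabla_{\mathbf{t}} L=\mathbf{0}$ it solves for the multipliers $\lambda_i^*$, shows they are all nonzero, and concludes from complementary slackness that every constraint $S_i\le S_{k^*}$ is active, i.e.\ all $R_k$ coincide; existence of such a point is then imported from the convergence of the IB algorithm proved in \cite{Chung23}. Your approach instead exploits the tree structure directly: the telescoping identity $\sum_{m\in T_k}R_m(\mathbf{t})=g_k(t_k)$ collapses the max--min into the single monotone scalar equation $\Phi(s)=T$, from which you construct the optimizer explicitly and establish optimality by a clean summation inequality. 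What this buys you is a self-contained proof that does not lean on the sufficiency of KKT for a non-convex problem or on an external convergence result, an explicit closed form $\hat t_k=g_k^{-1}(\bar s\,|T_k|)$ for the optimal allocation, and uniqueness of the maximizer---none of which the Lagrangian route delivers without additional work. The domain check you flag is indeed routine: $\Phi$ blows up as $s$ approaches the boundary of its natural domain while $\Phi(\bar s)=T<\infty$, so $\bar s$ lies strictly inside and each $\hat t_k$ is finite and in $(0,T)$ once $N_d\ge 2$.
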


\begin{proof}

First, we consider the characteristic of $R_k(\mathbf{t})$ function. 
Partial derivative of $R_k(\mathbf{t})$ with respect to $t_k$, denoted by $\nabla_{t_k} R_k(\mathbf{t})$, is calculated as follows:

\begin{align} \label{Rprime}
    \nabla_{t_k} R_k(\mathbf{t})&=\frac{\partial{R_k(\mathbf{t})}}{\partial{t_k}} \nonumber\\ 
    &=\sum_{n\in\mathbb{N}}{\frac{c_{k,n}}{\ln{2}}}{(-\frac{A_{k,n}}{t_k+A_{k,n}}+\ln{(1+\frac{A_{k,n}}{t_k})})}.
\end{align}
Since $\ln{(1+x)}>x/(1+x)$ when $x>0$, $\nabla_{t_k} R_k(\mathbf{t})>0$ holds, which means that $R_k(\mathbf{t})$ is monotonic increasing function for $t_k$.
This result implies that as more time is allocated to \textit{$k$}th node, the amount of data that the \textit{$k$}th node transmit to the upper node increases. 

In order to make our problem into a minimizing problem, let us define $S_k(\mathbf{t}) = -R_k(\mathbf{t})$.
Since (\ref{Rprime}) guarantees that $R_k(\mathbf{t})$ is monotonic increasing function for $t_k$, $S_k(\mathbf{t})$ is monotonic decreasing function for $t_k$:
\begin{align} \label{Sprime}
    \nabla_{t_k}S_k(\mathbf{t})<0,
\end{align}
and our formulated problem is as follows:
\begin{equation}
    \begin{aligned}
        \text{(P2)}\quad \argmin_{\mathbf{t}}\, \max_{k}\quad &S_k(\mathbf{t}) \\ 
        \textrm{s.t.} \quad &\sum_{j \in \mathbb{N}_d} {t_j}=T.
    \end{aligned}
\end{equation}
By adding inequality terms about the maximization of $S_k(\mathbf{t})$, we can remove an inner maxima:
\begin{equation}
    \begin{aligned}
        \text{(P3)}\quad \argmin_{\mathbf{t}}\quad &S_{k^*}(\mathbf{t}) \\
        \textrm{s.t.} \quad &S_{k^*}(\mathbf{t})\geq S_k(\mathbf{t})\,\, \forall\, k\in\mathbb{N}_d\backslash\{k^*\}, \\
        &\sum_{j \in \mathbb{N}_d} {t_j}=T.
    \end{aligned}
\end{equation}
Note that we exclude a condition when $k=k^*$ at the inequality constraint in order to simplify the calculation because excluding it does not affect the overall problem. 
Since $S_k(\mathbf{t})$ is not convex for $\mathbf{t}$, convex optimization method is not applicable.
Thus, we convert the problem into a dual problem using Lagrange function and find the optimal solution of primal problem using Karush–Kuhn–Tucker (KKT) conditions. 

Lagrange function of the problem (P3) is as follows:
\begin{align}
    L(\mathbf{t},\boldsymbol{\lambda},\nu)&=S_{k^*}(\mathbf{t})+\sum_{i \in \mathbb{N}_d \backslash \{k^*\}}{\lambda_i(S_i(\mathbf{t})-S_{k^*}(\mathbf{t}))}\nonumber\\
    &+\nu (\sum_{j \in \mathbb{N}_d} {t_j}-T),
\end{align}

and dual problem is as follows:
\begin{equation} \label{dualproblem}
    \begin{aligned}
    \max_{\boldsymbol{\lambda},\nu}\,\,\min_{\mathbf{t}}\quad&L(\mathbf{t},\boldsymbol{\lambda},\nu)\\
    \textrm{s.t.} \quad &\boldsymbol{\lambda} \succcurlyeq \mathbf{0}.
    \end{aligned}
\end{equation}

To find the optimal solution of primal problem (P3), KKT conditions are considered:
\begin{align}
    &\text{(P4-1)}\quad \nabla_\mathbf{t}L(\mathbf{t}^*,\boldsymbol{\lambda}^*,\nu^*)=\mathbf{0}, \\
    &\text{(P4-2)}\quad \lambda^*_i (S_i(\mathbf{t^*})-S_{k^*}(\mathbf{t^*}))=0 \,\,\,\forall\, i \in \mathbb{N}_d\backslash\{k^*\},\\
    &\text{(P4-3)}\quad S_i(\mathbf{t^*})-S_{k^*}(\mathbf{t^*}) \leq 0 \,\,\,\forall\, i \in \mathbb{N}_d\backslash\{k^*\},\\ 
    &\text{(P4-4)}\quad \sum_{j \in \mathbb{N}_d} {t^*_j}-T=0,\\
    &\text{(P4-5)}\quad \lambda^*_i \geq 0 \,\,\,\forall\, i \in \mathbb{N}_d\backslash\{k^*\},
\end{align}
where $(\mathbf{t}^*,\boldsymbol{\lambda}^*, \nu^*)$ satisfies KKT conditions. 
If KKT conditions are satisfied, strong duality holds ensuring $\mathbf{t}^*$ is the optimal solution of the primal problem. 

Partial derivative of $L(\mathbf{t},\boldsymbol{\lambda},\nu)$ about $t_x$ is represented by:
\begin{equation}\label{gradLaboutt}
    \begin{aligned}
        \frac{\partial{L}}{\partial{t_x}}&=
        \begin{cases}
            \lambda_x \nabla_{t_x}S_x(\mathbf{t})+\nu, \,\,& \mbox{if}\,\, x\neq k^*, \\
            \\
            (1-\sum\limits_{i \in \mathbb{N}_d \backslash \{k^*\}}{\lambda_i})\nabla_{t_{k^*}}S_{k^*}(\mathbf{t})+\nu, \,\,& \mbox{if}\,\, x = k^*.
        \end{cases}
    \end{aligned}
\end{equation}
In order to solve (P4-1), we put ${\partial{L}}/{\partial{t_x}}$ as $0$ and solve a simultaneous equation about $\lambda^*_x$ and $\nu^*$:
\begin{align}
\lambda^*_x &= \frac{\frac{1}{\nabla_{t_x} S_x(\mathbf{t}^*)}}{\sum\limits_{i \in \mathbb{N}_d}\frac{1}{\nabla_{t_i} S_i(\mathbf{t}^*)}}\label{lambda*}, \\
\nu^* &= -\frac{1}{\sum\limits_{i\in\mathbb{N}_d}\frac{1}{\nabla_{t_i} S_i(\mathbf{t}^*)}}, \label{nu*}
\end{align}
where $x \in \mathbb{N}_d\backslash\{k^*\}$, without consideration of division by zero due to (\ref{Sprime}). 
(P4-5) is also guaranteed by (\ref{Sprime}) and (\ref{lambda*}).
Since (\ref{lambda*}) stands for $\lambda^*_i \neq 0$, (P4-2) is simplified as:
\begin{align} \label{KKTresult}
    S_i(\mathbf{t}^*)-S_{k^*}(\mathbf{t}^*)=0 \,\,\,\forall\, i \in \mathbb{N}_d\backslash\{k^*\},
\end{align}
which is also satisfying (P4-3). 
In other words, (\ref{KKTresult}) is a sufficient condition for both (P4-2) and (P4-3). 
The KKT conditions ensure that obtained solution minimizes the primal problem, not the the solution exists. 
Referring to \cite{Chung23}, convergence of time slot allocation algorithm regardless of $\epsilon_1$ value is proven. 
By setting $\epsilon_1 \rightarrow 0, \epsilon_2 \rightarrow 0$, difference between maximum and minimum bits/Hz also converges to zero by the squeeze theorem with constant time slot summation. 
Therefore, we ensure that $\mathbf{t}^*$ satisfying (\ref{KKTresult}) and (P4-4) exists. 

Finally, $(\mathbf{t^*}, \boldsymbol{\lambda}^*,\nu^*)$ is the solution of KKT conditions (P4), which is also the optimal solution of the primal problem (P3) and the dual problem (\ref{dualproblem}). 
KKT conditions about $\mathbf{t}^*$ are (P4-4) and (\ref{KKTresult}), which is the requirements to satisfy (P3).
In addition, $R_k(\mathbf{t})$ is $-S_k(\mathbf{t})$.
Thus, we conclude that when $R_1(\mathbf{t}^*)=R_2(\mathbf{t}^*)=...=R_{N_d}(\mathbf{t}^*)$, $\mathbf{t}^*$ maximizes the minimum of $R_k(\mathbf{t})$ where $t_1+t_2+...+t_{N_d}=T$.

\end{proof}

\begin{algorithm}
\caption{GMGA for $\mathbf{c}^*$}\label{GMGA_alg}
\begin{algorithmic}[1] 
\STATE \textbf{main}
\STATE $p_1 = \text{direct connect}$
\STATE $p_i = \text{random connect} \quad \forall i \in [2,n_{origin}]$
\STATE $(\mathbf{R},\mathbf{t})= \text{evaluation}(\mathbf{p},\mathbf{t}_{\text{uniform}})$
\STATE $(\mathbf{p},\mathbf{R},\mathbf{t})= \text{selection}(\mathbf{p},\mathbf{R},\mathbf{t})$
\WHILE {$\mathbf{R}_{\text{max}} \text{ changes for last stop\_period}$}
\STATE $\mu_{i,j}=\frac{t_i \log_{2}(1+\frac{A_{i,j}}{t_i})}{\sum_{j \in \mathbf{N}}{t_i \log_{2}(1+\frac{A_{i,j}}{t_i})}}$
\STATE $o_{1,2,...,n_{\text{parent}}}=\mathbf{p}$
\STATE $o_{n_{\text{parent}}+1,...,n_{\text{offspring}}}=\text{mutation}(\text{crossover}(\mathbf{p}),\boldsymbol{\mu})$
\STATE $(\mathbf{R},\mathbf{t})=\text{evaluation}(\mathbf{o},\mathbf{t})$
\STATE $(\mathbf{p},\mathbf{R},\mathbf{t})=\text{selection}(\mathbf{o},\mathbf{R},\mathbf{t})$
\ENDWHILE
\STATE {$\mathbf{c}^* = p_1$}
\STATE \textbf{end main}
\\\hrulefill
\STATE \textbf{function} evaluation$(\mathbf{g}, \mathbf{t})$
\FOR {$i=1:\text{end}$}
\STATE $(R_i,t_i)=\text{time\_slot\_allocation\_alg}(g_i,\text{ts\_init}=t_i)$
\ENDFOR
\RETURN {$(\mathbf{R},\mathbf{t})$}
\STATE \textbf{end function}
\\\hrulefill
\STATE \textbf{function} selection$(\mathbf{g},\mathbf{R},\mathbf{t})$
\STATE $(\mathbf{R},\mathbf{I})=\text{sort}(\mathbf{R},\text{`descend'})$
\STATE rearrange $\mathbf{g},\,\mathbf{t}$ by sorted\_index, $\mathbf{I}$
\STATE $\mathbf{p}=g_{1,2,...,n_{\text{parent}}}$
\STATE $\mathbf{R}=R_{1,2,...,n_{\text{parent}}}$
\STATE $\mathbf{t}=t_{1,2,...,n_{\text{parent}}}$
\RETURN {$(\mathbf{p},\mathbf{R},\mathbf{t})$}
\STATE \textbf{end function}
\\\hrulefill
\STATE \textbf{function} crossover$(\mathbf{p})$
\FOR {$i=1:n_{\text{offspring}}-n_{\text{parent}}$}
\STATE $r_1,r_2=\text{random\_int}(1, n_{\text{parent}})$
\STATE $\text{div1}=\text{random\_int}(1, N_d)$
\STATE $\text{div2}=\text{random\_int}(\text{div1}, N_d)$
\STATE $g_i=\text{join}(p_{r_1}[1:\text{div1}],\,p_{r_2}[\text{div1}+1:\text{div2}],$\\
\STATE $\qquad \qquad p_{r_1}[\text{div2}+1:\text{end}])$
\ENDFOR
\RETURN {$\mathbf{g}$}
\STATE \textbf{end function}
\\\hrulefill
\STATE \textbf{function} mutation$(\mathbf{p},\boldsymbol{\mu})$
\FOR {$i=1:n_{\text{offspring}}-n_{\text{parent}}$}
\FOR {$j=1:N_d$}
\IF {random(0, 1) < $p_m$}
\STATE $g_i=$ mutate node $j$ of $p_i$ to uplink at \textit{$k$}th node with a probability of $\mu_{j,k}$.
\ENDIF
\ENDFOR
\ENDFOR
\RETURN {$\mathbf{g}$}
\STATE \textbf{end function}
\end{algorithmic}
\end{algorithm}

\subsection{Guided-Mutation Genetic Algorithm Based Topology Algorithm}

In this subsection, we propose a GMGA in order to find a sub-optimal solution of $\mathbf{c}$ in our formulated problem, which modulates mutation rate regarding the cost of the links between nodes.

Finding a relaying topology based on exhaustive search method requires searching from $(N_{d}+1)^{(N_d-1)}$ possible candidates, making it an NP-hard problem and implausible for real-world application.
Despite previous attempts to address this issue, practical application remains challenging due to extended computation times and comparably low minimum bits/Hz than optimal solution.
For instance, the approach using a variational autoencoder (VAE) in \cite{Chung23} aimed to train layers by devising loss function evaluation method called PT-EVM and minimizing it to discover sub-optimal solution of $\mathbf{c}$. 
However, this method faced challenges in practical implementation due to time-consuming nature of neural networks in VAE scheme.

In parallel with these challenges, in prior approach, GA is also commonly used as a solver for NP-hard problems in a wide area. 
Such application is also applied to our formulated problem, since our system model takes the form of a tree structure, which can be divided into branches that is easy to evaluate whether it helps overall performance or not.
By defining genes of chromosome as the index of the parent node, GA can handle our formulated problem by selecting the beneficial branches and inheriting them to the next generation.

With these advantages, GA operates by selecting the best-performing candidates from a generation and applying crossover and mutation to pass on their traits to the next generations. 
Throughout this process, GA compares and refines high-score expected candidates rather than evaluating every possible candidate. 
Since our formulated problem has requirements including all nodes being connected to a sink and having only one parent node through an uplink, we add some details in our proposed GA scheme.  
Detailed explanations of four components (evaluation, selection, crossover, and mutation) of our scheme are as follows: 

First, in the evaluation part, the candidates are evaluated based on their measured performances.
The evaluation part called at line 4 of Alg. \ref{GMGA_alg} which is defined from line 15 to 20 signifies the evaluation for the first generation, while line 10 represents the evaluation for the other generations. 
Since our objective is to maximize the $R_{\text{min}}$, we designate the $R_{\text{min}}$ value obtained through time slot allocation algorithm as a score of a candidate. 

Second, in the selection part called at line 5 and 11 of Alg. \ref{GMGA_alg} which is defined from line 21 to 28, all candidates are sorted in descending order, and a specific number of high-scored candidates are then selected as parents for the next generation.

Third, in the crossover part called at line 9 of Alg. \ref{GMGA_alg} which is defined from line 29 to 38, the chromosomes of two parent candidates are combined, creating a new offspring chromosome with new attributes. 
Crossover occurs at $n_c$ points, and in the Alg. \ref{GMGA_alg}, we provide an example for the case where $n_c=2$. 

Finally, in the mutation part called at line 9 of Alg. \ref{GMGA_alg} which is defined from line 39 to 48, the genes of the individual are mutated into random values under a mutation probability $p_m$, in order to explore other possibilities which cannot be induced from crossover operations.  
After completing the crossover or mutation part, a tree validation test is conducted at the end of each process to satisfy the constraints (\ref{c2}) and (\ref{c3}) in (P1). 
In addition, we include elitism in our algorithm to ensure the enhancement of performance as described in line 8 of Alg. \ref{GMGA_alg}.

\begin{figure*}
	\centering
	\includegraphics[width=1.0\linewidth]{./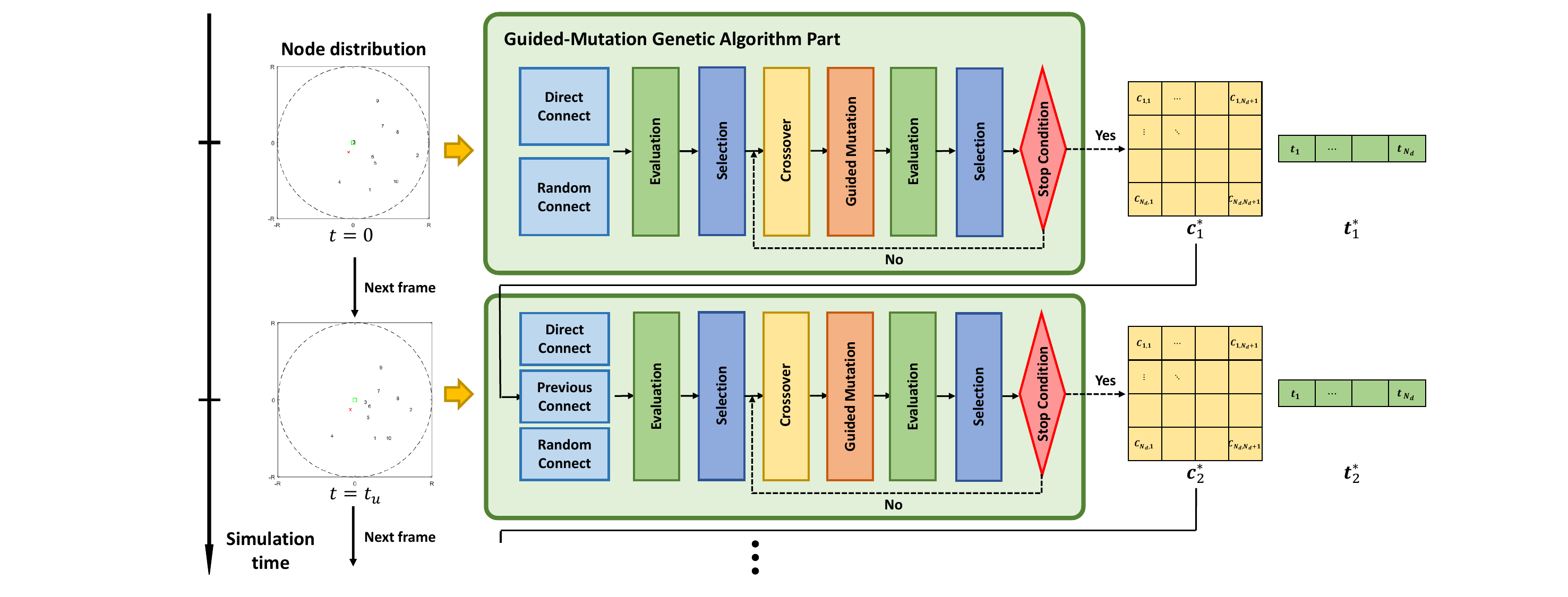} 
	\caption{Structure flow chart of the mobility-aware iterative relaying topology algorithm to find $\mathbf{c^*}$ and $\mathbf{t^*}$ throughout the simulation.}
	\label{struct}
 \vspace{-0.5 cm}
 \end{figure*}

Above mentioned descriptions are the main structure of GA for application to our formulated problem. 
The basic form of GA uses the uniform mutation rate, which means that when a node A is selected to be mutated, the probability of another node being selected as an uplink for node A is uniform. 
However, the uniform mutation rate leads to an excess of low-performance candidates in our formulated problem, such as nodes connecting directly to the sink without sufficient energy or to the distant nodes. 
Furthermore, uniform mutation rate results in some performance related problems, including an increment of candidates which will not be selected in the evaluation part, reducing the variation between parents and offsprings. 
The issue of topology remaining unchanged throughout generations mentioned above can lead to very early stopping under the mistaken decision that the algorithm has reached an optimal topology, resulting in premature outcomes.

In our proposed GMGA method, the mutation rate is adjusted to induce mutations toward more reasonable node connections. 
Specifically, it is adjusted to be inversely proportional to the link cost so that the probability of a specific node being chosen as an uplink is varied. 
Through this adjustment, reasonable mutations such as mutating towards adjacent nodes are expected.
The time slot used for calculating link cost is derived from the time slots of parents, since it is closer to the optimal time slot than time slot obtained by uniformly dividing the time frame. 
Detailed explanation of mutation rate $\mu_{i,j}$, which is the conditional probability of uplink of node i mutating to node j given that node i mutates, is as follows:
\begin{align}
    \mu_{i,j}&=P(\text{node i uplinks to node j | node i mutates})\\
            &=\frac{t_i\log_{2}{(1+\frac{A_{i,j}}{t_i})}}{\sum_{j \in \mathbb{N}}{t_i\log_{2}{(1+\frac{A_{i,j}}{t_i})}}}
\end{align}
where $\mathbf{t}$ is the time slot allocation of parents.

The guided-mutation results in less computation time than basic GA with uniform mutation rate.  
Besides, in order to boost our scheme more for the real-world application, we devise several tunings as follows:

First, we include the case of connecting all nodes directly to the sink in the first generation, as described in line 2 of Alg. \ref{GMGA_alg}. 
One of the advantages of GA is the ability to include `promising' candidates in the learning process without modulation of the algorithm scheme, allowing them to compete with other candidates. 
Typical example of `promising' candidates is a case of direct connection of all nodes to the sink. 
If all nodes have sufficient energy, it is better to connect directly to the sink since relaying is detrimental. 
Therefore, in the first generation, we include a case of direct connection to the sink, considering scenarios where there are enough PBs or nodes densely clustered around PBs. 
While this tuning could be perceived as disrupting exploration, the possibility of a random tree showing higher performance than direct connect is rare. 
Moreover, since only one candidate among several candidates in the first generation is modified, it has minimal impact on the exploration.

Second, we start time slot allocation from the time slot allocation result of parents except first generation in order to calculate time slot of offsprings faster, since the topologies of offsprings are similar to the topologies of parents. 
This approach is experimentally proven to be more time-efficient than starting calculations from a uniform time slot. 
Additionally, scores of all topologies generated during the learning process are stored, so that if a score calculated in a past generation is redundantly required in the current generation, the stored value is retrieved without repetitive calculations. 
Since time slot allocation algorithm and calculation of rate are the most time consuming part of our algorithm, the storage of calculation is effective in reducing the computation time.  

Finally, in the evaluation part, using a time slot allocation algorithm with small $\epsilon_1$ requires more iterations to minimize the difference between the maximum and minimum $R_k(\mathbf{c},\mathbf{t})$. 
Meanwhile, in the evaluation process, precise $R_k(\mathbf{c},\mathbf{t})$ values are not required.
Instead, our objective is to compare which topology is better. 
Therefore, we define $\epsilon'_1$ which is bigger than $\epsilon_1$ for the time slot allocation in the evaluation part, allowing faster comparison and tolerant convergence. 
Note that the $\epsilon'_1$ value is sufficiently small, so that we determine that time slot allocation is done for comparison purpose when the difference between the maximum and minimum values of $R_k(\mathbf{c},\mathbf{t})$ fell below $\epsilon'_1$.

\subsection{Mobility-aware Iterative Relaying Topology Algorithm}
In this subsection, we describe a relaying topology algorithm regarding mobile nodes.

Previously investigated relaying topology algorithms are hard to apply in the real-world which involves mobile nodes. 
Some methodologies with a rapid computation result in insufficient capacity for transmit their data due to an inefficient topology.  
Conversely, some methodologies showing good performance fail to ascertain a topology before the subsequent simulation frame. 
This trade-off between computational speed and performance impedes the adoption of relaying topologies in the real-world with mobile nodes since the nodes unable to operate in the calculated topology which is sub-optimal for the current time frame.
However, GA has a simple structure which facilitates calculations and ensures performance improvements over generations. 
Unlike other algorithms, GA can start calculations from the previously calculated result while avoiding the redundant calculation. 
Thus, by using these advantages, we utilize GA which has relatively fast calculation and good performance. 

As described above, the merits of GA are incorporating promising candidates into the learning process and refining the candidates to find a sub-optimal solution of $\mathbf{c}$, iteratively.
Thus, since initiating the calculation with a relaying topology which is close to the optimum enhances the performance, we assign the final relaying topology results from the previous frame to the first generation of current frame for the mobility simulation. 
The reason of using the relaying topology result from the previous frame is as follows.

While nodes move during a unit time interval, relocated nodes are akin to the pre-movement state. 
Additionally, the topology of the system model only depends on a distribution of the nodes if other conditions are unchanged during the unit time interval.  
Therefore, the changes in relaying topology across consecutive simulation frames are minimal.
By incorporating of the final relaying topology resulted from the previous frame into the first generation of the current frame, it provides more reasonable starting point than a random tree. 
Note that the incorporation of final results from the previous frame has a minimal impact on exploration, similar to the above mentioned incorporation of the direct connection case.

Including all of our considerations, flow chart of our mobility-aware iterative relaying topology algorithm is depicted in Fig. \ref{struct}.

\section{Performance Analysis and Discussions}
\subsection{Test Environment and Parameter Configuration}

\begin{table*}[]
\centering
\renewcommand{\arraystretch}{1.2}
\small
\begin{tabular}{llll}
\Xhline{3\arrayrulewidth}
Category & Parameter & Definition & Value \\ 
\Xhline{3\arrayrulewidth}
\multirow{2}{*}{Environment} & $R$ & Radius of circle that nodes are distributed & 0.5 km \\
 & $P_b$ & Power of power beacon & 1 W \\ \hline
\multirow{5}{*}{TDMA system} & $T$ & Time frame & 100 msec \\
 & $|h_{k,n}|^2$ & Channel between \textit{$k$}th node and \textit{$n$}th node & $\sim$ Exp(1) \\
 & $\alpha$ & Path loss exponent & 3 \\
 & $BW$ & Bandwidth & 125 kHz \\
 & $NF$ & Noise figure & 6 dB \\ \hline
\multirow{5}{*}{\begin{tabular}[c]{@{}l@{}}Genetic \\ algorithm\end{tabular}} & $n_{origin}$ & The number of candidates in the first generation & 5 \\
 & $n_{parent}$ & The number of parents & 5 \\
 & $n_{offspring}$ & The number of offsprings & 50 \\
 & $n_c$ & The number of crossover points & 2 \\
 & $p_m$ & Mutation probability & 0.05 \\ \hline
\multirow{3}{*}{\begin{tabular}[c]{@{}l@{}}Time slot\\ allocation\\ algorithm\end{tabular}} & $\epsilon_1$ & Upper limit of rate budget difference (for rate budget calculation) & $10^{-6}$ bits/Hz\\
 & $\epsilon'_1$ & Upper limit of rate budget difference (for evaluation part) & $10^{-3}$ bits/Hz\\
 & $\epsilon_2$ & Minimum allocatable time slot & $10^{-7}$ sec\\ \hline
\multirow{3}{*}{Mobility} & $v_c$ & Speed of each node & 6.42 m/s \\
 & $T_s$ & Total mobility simulation time & 180 sec \\
 & $T_u$ & Unit mobility simulation time & 20 sec \\
\Xhline{3\arrayrulewidth}
\end{tabular}
\caption{Experimental parameters used in simulation.}
\label{exp_param}
\end{table*}

In this subsection, we define the system model parameters for evaluating the performance of our proposed scheme.

In our system model, nodes are distributed within a circle with a radius $R$ of 0.5 km. 
Each node transmits data under a TDMA system with a time frame $T$ of 100 milliseconds. 
For the energy harvesting model, the transmit power of the Power Beacon (PB) is set to 1W. 
The energy harvesting model is assumed as a linear model, where the energy of node is 0.7 times the harvested energy sent from the PBs. 
The wireless fading channel, represented by $|h_{k,n}|^2$, follows an exponential random variable with a unit mean. 
The path loss exponent $\alpha$ is 3, the bandwidth is 125 kHz, and the noise figure $NF$ is 6dB in our system model, respectively. 
The Noise power $N$ is calculated under $N=-174+NF+10\log_{10}{BW}$.

Next, we set the hyper-parameters for our GA as follows. 
First, the size of the first generation $n_{\text{origin}}$ and the size of the parent generation $n_{\text{parent}}$ are both 5. 
The size of the offspring generation $n_{\text{offspring}}$ is set to 50. 
Next, the number of crossover points is assumed to be $n_c= 2$, and the mutation probability to be $p_m = 0.05$. 

In the evaluation part, the tolerant value of $\epsilon_1$ which is $\epsilon'_1$ is set to $10^{-3}$ bits/Hz. 
For the time slot allocation algorithm which is used to calculate the exact performance of an algorithm, $\epsilon_1$ is set to $10^{-6}$ bits/Hz, and $\epsilon_2$ is set to $10^{-7}$ sec. 
Finally, we set the stop condition to halt the GA if $R_{\text{min}}$ remains unchanged for $200/N_{d}-4$ generations.

For the mobility-aware system model, parameters are set as follows. 
First, the speed of each node is set to $6.42$ m/s, which is the average speed of vehicles in Seoul, Korea \cite{Seoul22}. 
Second, the staying time of RWM is set to zero in order to assume worst case regarding mobility. 
If a node arrives at its destination, the node waits until the termination of current frame and sets the next random destination. 
Finally, the total simulation time $T_s$ is 3 minutes, and the unit simulation time $T_u$ is 20 seconds.
Thus, the number of simulation frame for the given number of nodes, PBs and scheme is 10. 
We present the values and definitions of the experimental parameters in Table \ref{exp_param}.

To compare the performance of our GMGA with the performance of other schemes, we choose 4 different schemes and optimal solution (if applicable) case along with our proposed scheme:

\begin{itemize}
    \item \textbf{Optimal solution (Opt)} which exhaustively compares the performance of all possible trees in order to find the optimal solution, and allocates time slot using IB time slot allocation algorithm.  
    \item \textbf{Direct connect scheme (Dir)} which connects every node to the sink and allocates time slot using time slot allocation algorithm. 
    \item \textbf{MST scheme (MST)} which makes a Minimum Spanning Tree (MST) based on the link cost denoted as $1/\left( t_i \log _{2} (1+\Gamma_{i,j})\right)$ starting from the sink with uniform time slot allocation, and allocates time slot using time slot allocation algorithm.  
    \item \textbf{Greedy scheme (Greedy)} which establishes initial connections by linking all nodes directly to the sink and iteratively selects a node $i$ randomly and connects it to the node with the highest achievable rate, determined by $\min(t_i \log_{2} (1+\Gamma_{i,j}), \mathbf{B}_{j})$ $\forall j \in \mathbb{N}_d$, and allocates time slot using time slot allocation algorithm. 
    \item \textbf{VAE scheme (VAE)} which decides the topology with VAE and PT-EVM scheme \cite{Chung23}, and allocates time slot using time slot allocation algorithm. 
    \item \textbf{Proposed scheme (Prop)} which decides the topology with GMGA, and allocates time slot using time slot allocation algorithm. 
\end{itemize}

All considered schemes are implemented on Matlab R2023a, using a computer with an Intel Core i7-12700 and 32GB of memory, without GPU acceleration.

In an environment where nodes are mobile, it is necessary to compute a relaying topology with a proper computation time, so that it can reflect the current distribution of the nodes accordingly.
Thus, not only regarding the importance of computation time presented above, but also examining the quantitative comparison between the existing schemes and our proposed scheme matters, we focus on performance metrics rate budget and computation time in this paper.

\begin{figure}
	\centering
 \mbox{
 \subfigure[time = 0 sec]{\label{0sec}
	\includegraphics[width=0.3\linewidth]{./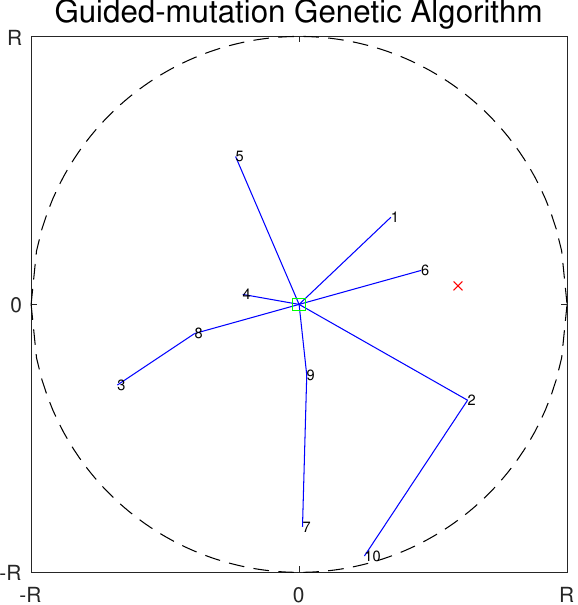} 
 }
  \subfigure[time = 20 sec]{\label{20sec}
 \includegraphics[width=0.3\linewidth]{./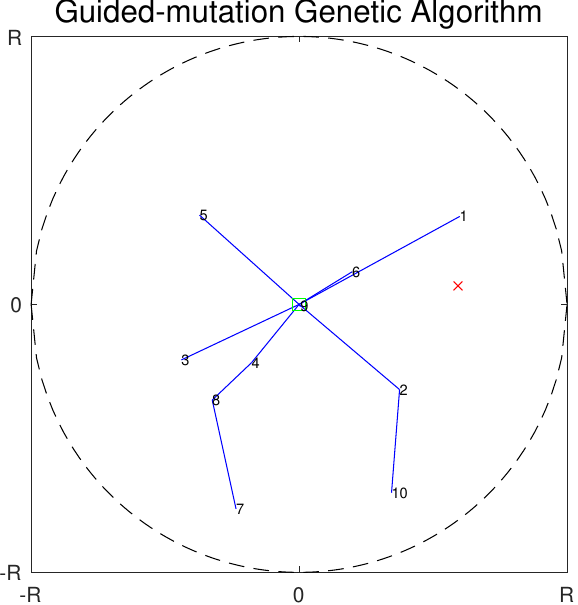} 
 }
  \subfigure[time = 40 sec]{\label{40sec}
 \includegraphics[width=0.3\linewidth]{./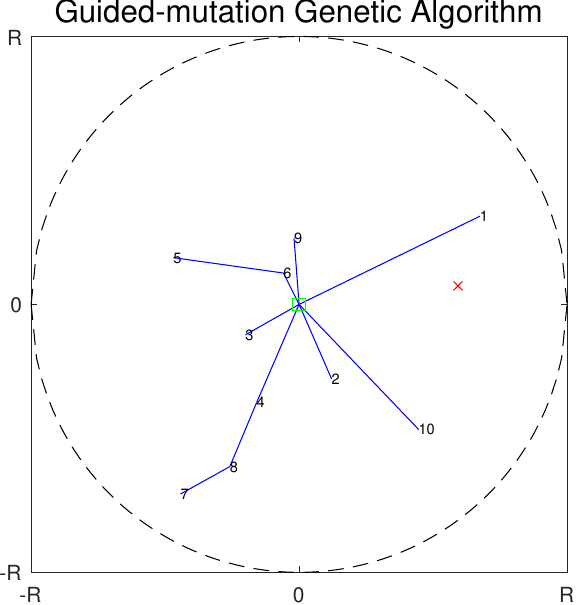} 
 }
    }
	\caption{Change in relaying topology of mobility simulation over time for $N_{d}=10$, $N_{b}=1$ example.}
	\label{mobility_simulation}
  \vspace{-0.5 cm}
\end{figure}

\begin{figure}[t]
	\centering
	\includegraphics[width=0.9\linewidth]{./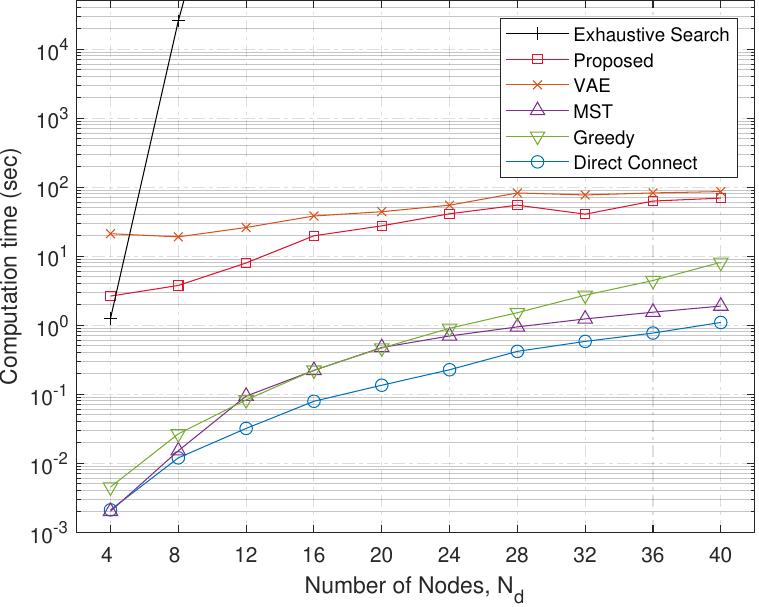} 
	\caption{Computation time of 6 different schemes with respect to the number of nodes, $N_d$.}
	\label{Ex3}
  \vspace{-0.5cm}
\end{figure}

\begin{figure*}
	\centering
	\mbox{\subfigure[$N_d=5$]{\label{Ex1-1} \includegraphics[width=0.33\linewidth]{./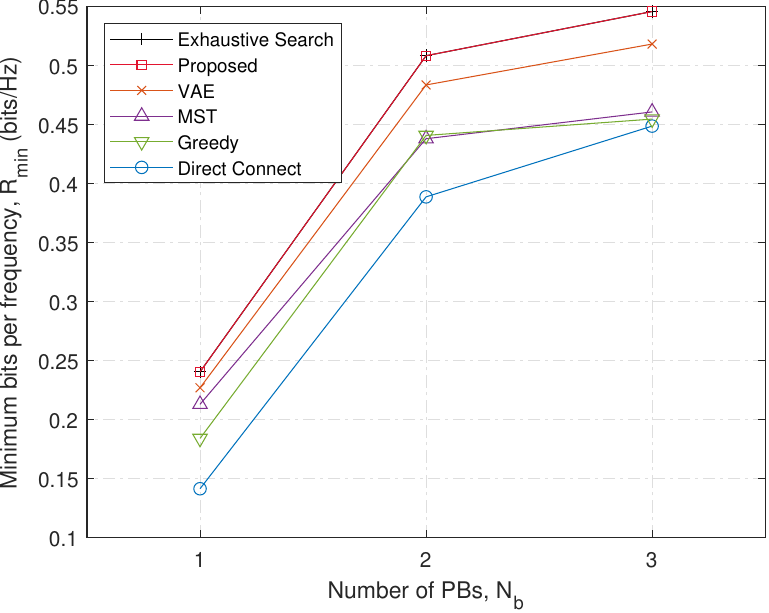}
	}
	\subfigure[$N_d=6$]{\label{Ex1-2} \includegraphics[width=0.33\linewidth]{./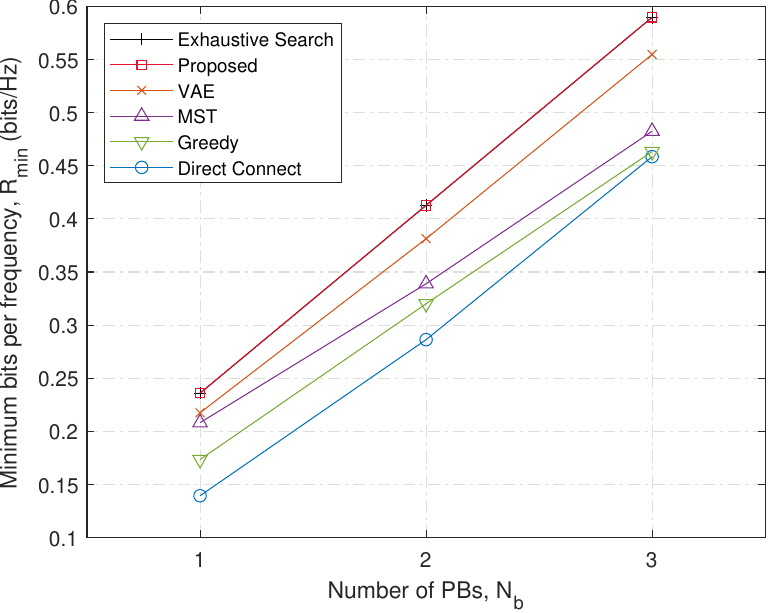}
    }
    \subfigure[$N_d=7$]{\label{Ex1-3} \includegraphics[width=0.33\linewidth]{./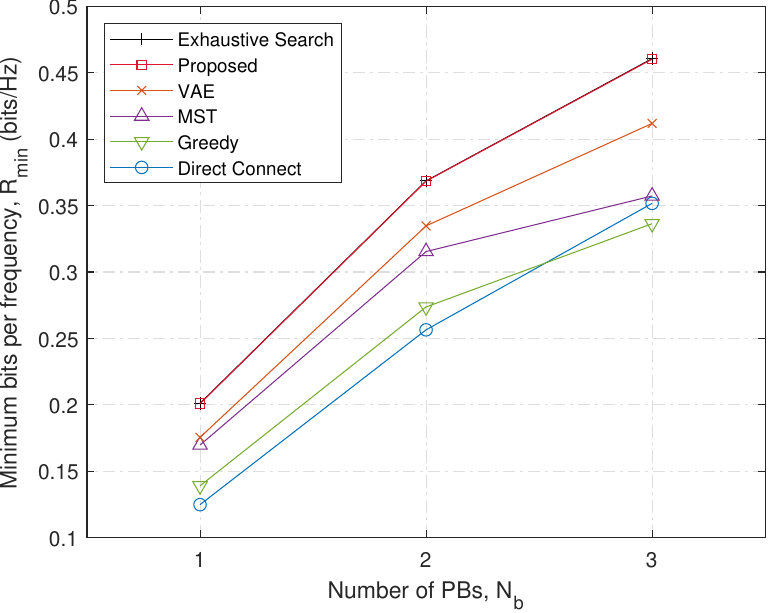}
	}
    }
	\caption{\mbox{Stationary performance comparison with respect to the number of PBs, $N_b$ in $N_d\in\{5,6,7\}$}}
	\label{Ex1}
  \vspace{-0.3cm}
\end{figure*}

\begin{figure*}
	\centering
	\mbox{\subfigure[$N_d=10$]{\label{Ex2-1} \includegraphics[width=0.33\linewidth]{./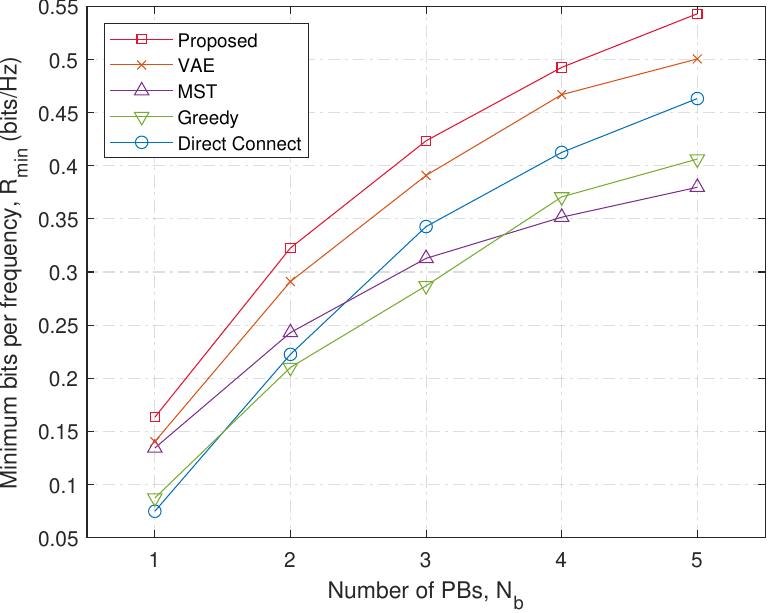}
	}
	\subfigure[$N_d=20$]{\label{Ex2-2} \includegraphics[width=0.33\linewidth]{./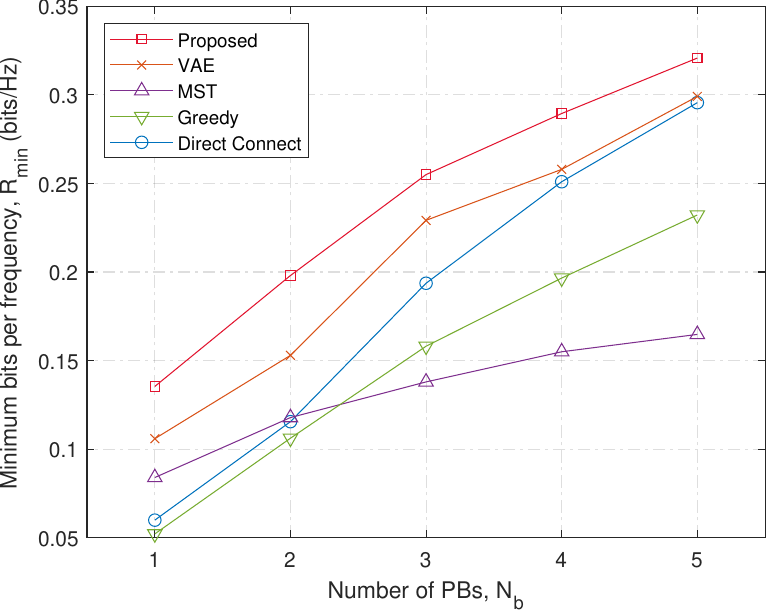}
    }
    \subfigure[$N_d=30$]{\label{Ex2-3} \includegraphics[width=0.33\linewidth]{./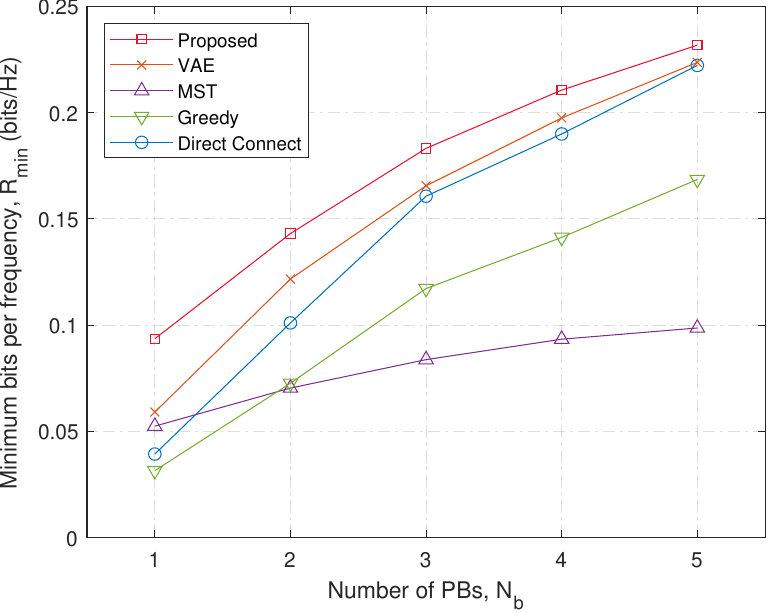}
	}
    }
	\caption{\mbox{Stationary performance comparison with respect to the number of PBs, $N_b$ in $N_d\in\{10,20,30\}$}}
	\label{Ex2}
  \vspace{-0.3cm}
\end{figure*}

\begin{figure*}
	\centering
	\mbox{\subfigure[$N_d=5$]{\label{Ex4-1} \includegraphics[width=0.33\linewidth]{./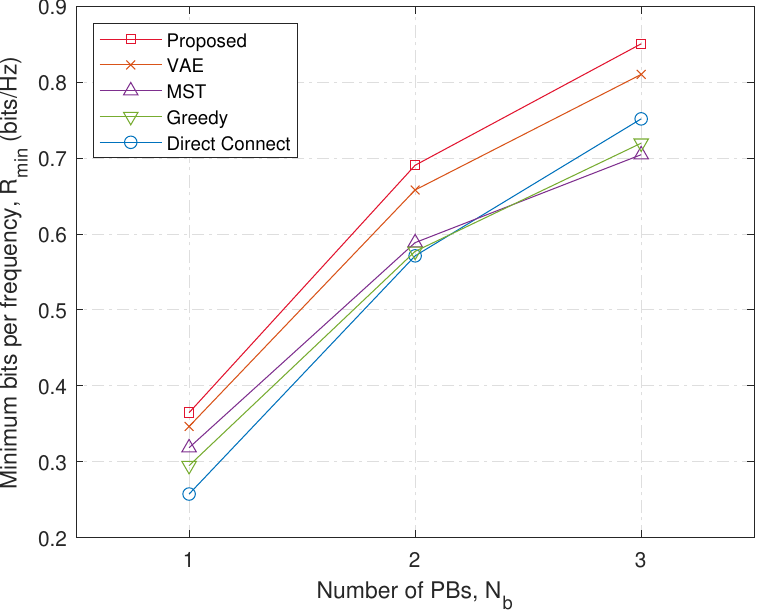}
	}
	\subfigure[$N_d=10$]{\label{Ex4-2} \includegraphics[width=0.33\linewidth]{./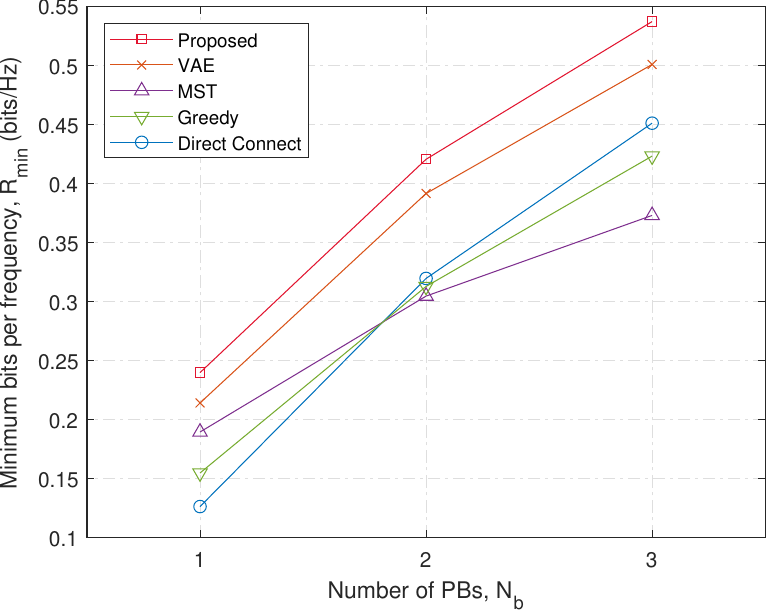}
    }
    \subfigure[$N_d=20$]{\label{Ex4-3} \includegraphics[width=0.33\linewidth]{./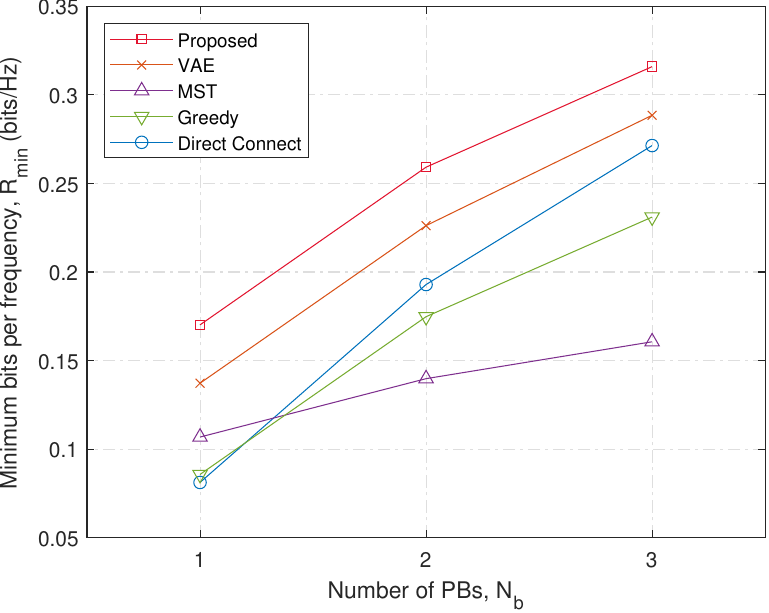}
	}
    }
	\caption{\mbox{Mobility performance comparison with respect to the number of PBs, $N_b$ in $N_d\in\{5,10,20\}$}}
	\label{Ex4}
  \vspace{-0.3cm}
\end{figure*}

\begin{figure*}
	\centering
	\mbox{\subfigure[$N_b=1$]{\label{Ex5-1} \includegraphics[width=0.33\linewidth]{./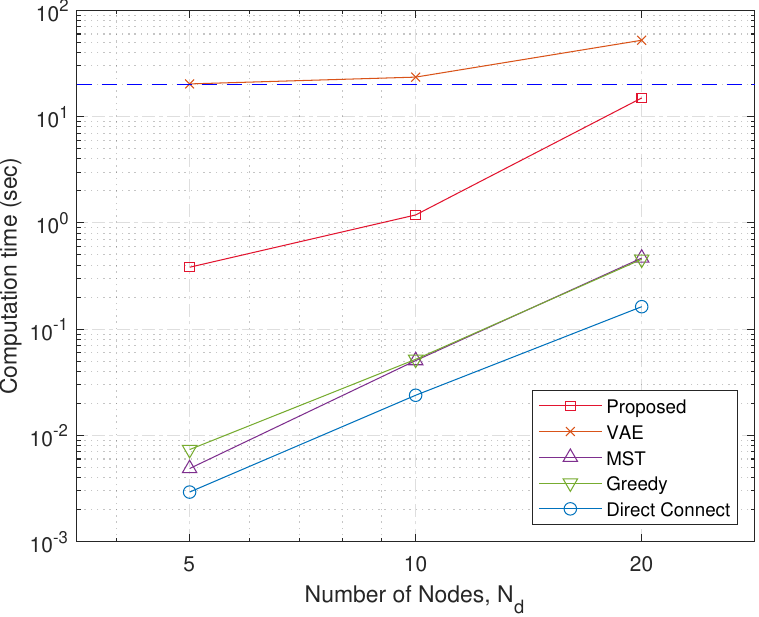}
	}
	\subfigure[$N_b=2$]{\label{Ex5-2} \includegraphics[width=0.33\linewidth]{./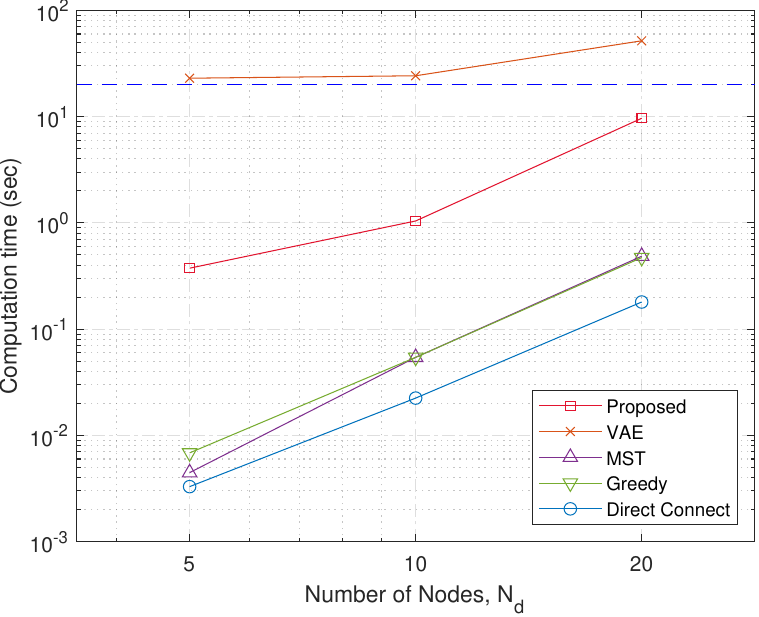}
    }
    \subfigure[$N_b=3$]{\label{Ex5-3} \includegraphics[width=0.33\linewidth]{./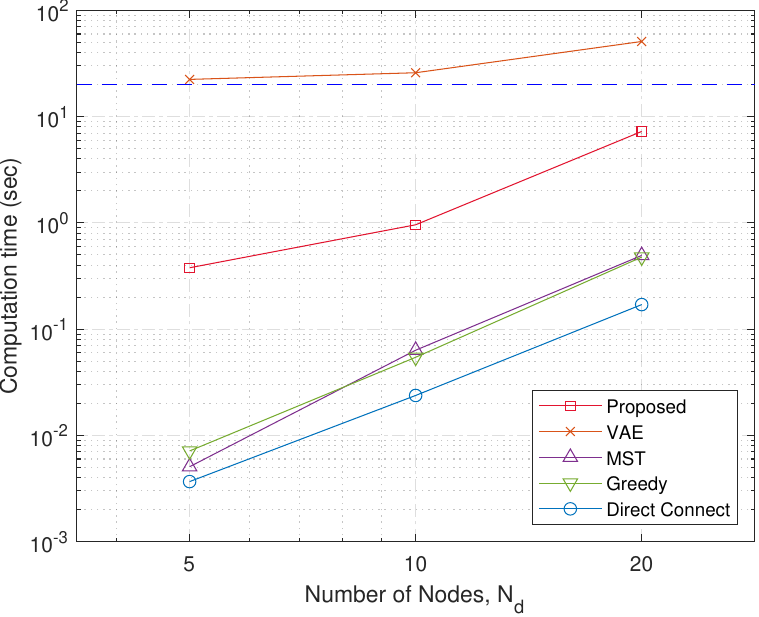}
	}
    }
	\caption{\mbox{Mobility performance comparison with respect to the number of nodes, $N_d$ in $N_b\in\{1,2,3\}$}}
	\label{Ex5}
  \vspace{-0.3cm}
\end{figure*}

\subsection{Experiment Analysis and Discussion}

Fig. \ref{mobility_simulation} shows the change in relaying topology obtained by proposed scheme as all nodes are moving under RWM. 
Each sub-figure \ref{0sec}, \ref{20sec}, \ref{40sec} shows the sample distribution of 10 nodes when the simulation time is 0, 20, 40 seconds, respectively. 
The relaying topology of each distribution is changing over time for high $R_\text{min}$ of changing distribution since relaying topology is dependent on the distribution of nodes.

Fig. \ref{Ex3} shows the computation time of considered schemes with respect to the number of nodes from $N_d=4$ to $40$ with an interval of 4. 
Note that the maximum number of nodes is set to 40 in this experimental configuration, with reference to \cite{Birtane24}, which set the maximum number of nodes in a WSN IoT network at 32, and \cite{Li20}, which set the maximum number of nodes in an IoT network at 30 including malicious nodes.
Each data point of Fig. \ref{Ex3} is an averaged computation time results of 30 random distributions of nodes and PBs. 
The computation time of exhaustive search method is excessive for real-world application, suggesting the need for an alternative scheme.
Also, since other deep learning techniques have characteristics of neural networks, adding VAE as a comparative group scheme could be considered as a representative for comparisons over the other deep learning schemes.

In Fig. \ref{Ex3}, the computation time increases as the number of nodes increases due to the increased complexity of relaying topology problem. 
However, it can be seen from the computation time results for the proposed scheme that there are some cases where the computation time decreases even as the number of nodes increases, especially around in a data point with 32 nodes.
This fluctuation results from a statistical phenomenon due to the finite sample size, along with the randomness characteristics of the GA. 
Nevertheless, the tendency for computation time to increase as the number of nodes increases can still be confirmed in Fig. \ref{Ex3}.

Additionally, our proposed GMGA takes less computation time than the VAE scheme while takes more computation time than the other conventional schemes. 
Note that the computation time for VAE includes the time spent in the training phase along with the inference phase, since the VAE scheme proposed in \cite{Chung23} has a structure where training phase and inference phase are inseparable, and requires specialized training for each topology, followed by its respective inference.
Unlike VAE scheme using back propagation for learning, the GMGA choose the best relaying topology among candidates, requiring less computation time than VAE scheme.
In addition, GMGA avoids redundant scoring of relaying topology, decreasing the usage of laborious time slot allocation algorithm. 
Note that the difference in computation time between other conventional schemes and our proposed scheme becomes smaller as the number of mobile nodes increases. 

Fig. \ref{Ex1-1} to Fig. \ref{Ex1-3} show the $R_{\text{min}}$ of considered schemes for a stationary simulation with respect to $N_b \in \{1,2,3\}$ when $N_d \in \{5,6,7\}$.
Each data point of Fig. \ref{Ex1-1} to Fig. \ref{Ex1-3} is an averaged performance results of 30 random distributions of nodes and PBs. 
Note that value of $R_k(\mathbf{c},\mathbf{t})$ can be transformed into a rate by multiplying $BW/T$ which is constant in our test environment. 
In addition, $R_k(\mathbf{c},\mathbf{t})$ is calculated by the IB time slot allocation algorithm whose $\epsilon_1=10^{-6}$ so that the difference between the maximum and minimum $R_k(\mathbf{c},\mathbf{t})$ is negligible. 
Our proposed scheme shows the superior performance among considered schemes for every condition. 
Moreover, the results of our proposed scheme overlaps the results of exhaustive search scheme, which means that our proposed scheme finds the optimal relaying topology when $N_b \in \{1,2,3\}$, $N_d \in \{5,6,7\}$. 
Additionally, as the power beacon is added, the overall received power at each node increases, and therefore it can be seen that the minimum bits/Hz increases accordingly across all considered schemes.

Fig. \ref{Ex2-1} to Fig. \ref{Ex2-3} show the $R_{\text{min}}$ of considered schemes with respect to $N_b \in \{1,2,3,4,5\}$ when $N_d \in \{10,20,30\}$. 
Each data point of Fig. \ref{Ex2-1} to Fig. \ref{Ex2-3} is also an averaged performance results of 30 random distributions of nodes and PBs. 
As in the previous case, our proposed scheme shows superior performance among the considered schemes for every condition. 
Similar to the results of previous $N_d \in \{5,6,7\}$ case, it can be observed that the rate budget of each node gradually increases as the received power of each node increases.
Note that simulation using exhaustive search method is not conducted due to excessive computation time.

As shown in Fig. \ref{Ex1} and Fig. \ref{Ex2}, our proposed scheme shows the superior performance among various $N_d$ from 5 to 30 and $N_b$ from 1 to 5 compared to the above mentioned conventional schemes, illustrating the scalability of our proposed method.
We suggest the reason of the superior performance as follows: 
The guided mutation guide the connection of node into more rational way.
The optimal relaying topology is likely to have connections from one node to another node where the cost is lower rather than higher. 
While connecting to a node with higher cost can be beneficial in a global aspect, it is rare for the final optimal result. 
Therefore, in the GMGA, the conditional probability $\mu_{i,j}$ given that a mutation occurs is likely to be higher than $1/N_d$, which is the conditional probability of randomly selecting a node. 
Let us assume that $i$th node connects to $j$th node which is an optimal connection. 
Since there is no way to know whether this connection is optimal during a process, mutations occasionally occur at $i$th node. 
For the uniform mutation rate, the conditional probability of the $i$th node connecting to a different node instead of the $j$th node due to a misjudgment is $1-1/N_d$. 
However, for the GMGA, the conditional probability of misjudgment is $1-\mu_{i,j}$, which is lower than $1-1/N_d$. 

Fig. \ref{Ex4-1} to Fig. \ref{Ex4-3} show the $R_\text{min}$ of considered schemes for a mobility simulation with respect to $N_b \in \{1,2,3\}$ when $N_d \in \{5,10,20\}$. 
Each data point of Fig. \ref{Ex4-1} to Fig. \ref{Ex4-3} is an averaged performance results of 30 random distributions of nodes and PBs during mobility simulation. 
In other words, each data point is an average of $30$ random distributions consisting of $10$ simulation frames, totaling $300$ simulation results. 
Fig. \ref{Ex4-1} to Fig. \ref{Ex4-3} show that our proposed scheme has a superior performance among considered schemes for every condition. 
Note that the number of nodes is selected as 5, 10, and 20 in order to achieve similar coverage to stationary simulations and to prevent excessive computation time while conducting simulations, which requires about ten times more calculations than stationary simulations.

Fig. \ref{Ex5-1} to Fig. \ref{Ex5-3} show the computation time of considered schemes for a mobility simulation with respect to $N_d \in \{5,10,20\}$ when $N_b \in \{1,2,3\}$. 
Each data point of Fig. \ref{Ex5-1} to Fig. \ref{Ex5-3} is also an averaged performance results of 30 random distributions of nodes and PBs during mobility simulation. 
Fig. \ref{Ex5-1} to Fig. \ref{Ex5-3} show that as the number of nodes increases, computation time increases due to the increased complexity of the problem. 
In addition, it shows that the computation time of our proposed algorithm is shorter than the VAE scheme and longer than the other schemes. 
The computation time of the direct connect scheme is shortest since it only calculates a single time slot allocation of a defined topology. 
On the other hand, the computation time of the VAE scheme is the longest since it calculates numerous weights by back propagation. 
The blue dashed line of Fig. \ref{Ex5-1} to Fig. \ref{Ex5-3} indicates the unit simulation time, specified as 20 seconds in this test environment. 
The result shows that the computation time of our proposed scheme is shorter than the unit simulation time, which means that calculation of the relaying topology of the current frame without accumulating calculation demands is possible.
Therefore, a mobility-aware iterative relaying topology algorithm is appropriate to find a sub-optimal relaying topology in the real-world. 

As shown in Fig. \ref{Ex4} and Fig. \ref{Ex5}, our proposed scheme shows the superior performance compared to the above mentioned conventional schemes.
The reason we suggest is an inheritance of a final result, while other conventional schemes are not considered to start calculations from a given topology. 
However, our proposed scheme can start calculation from the given topology which is the final topology of previous simulation frame in our consideration, enabling the calculation from an advantageous status. 
If starting computation of the relaying topology from a random distribution is considered, potential area of each node corresponds to a region with a radius of $R$, with an area of $R^{2}\pi$. 
On the other hand, if starting computation from the final result of the previous simulation frame is considered, maximum potential area of each node is $v_c T_u$ away from the previous location of node, with an area of $(v_c T_u)^2\pi$.
Note that $(v_c T_u)^2\pi$ is a maximum value considering a node which locates nearby the edge of simulation circle with a radius R. 
Therefore, compared to the random distribution, calculation starting from the previous simulation frame has a positional uncertainty of $({v_c T_u}/{R})^2$ which is about 6.59\% in our parameter configuration. 
Furthermore, since this value pertains to a single node, starting calculations from the final result of previous simulation frame significantly reduces uncertainty in the mobile IoT system with a large number of nodes.
Thus, fewer changes in the topology are expected than starting calculations from a random node distribution, leading to a superior performance and reduced computation time.

In addition, in Fig. \ref{Ex5}, our proposed scheme takes less computation time than the VAE scheme but takes more computation time than the other conventional schemes. 
We assume that the results related to computation time come from the simplicity of the GA. 
The GA consists of crossover, mutation, evaluation, and selection parts, and the most time-consuming part is the evaluation part since it includes the time slot allocation algorithm which involves recursive logarithmic operations. 
However, the time slot allocation algorithm becomes less complex since the adjustment of $\epsilon'_1$ enables the tolerant calculation.
Let us suppose that $\Delta$, which is the difference between the maximum rate budget and the minimum rate budget, decreases by half with every iteration \cite{Chung23}. 
In order to terminate the recursion of the algorithm, $\Delta$ must satisfy the condition $\Delta \leq \epsilon'_1$. 
When the number of iterations is $n$, we can transpose the above condition to $n \geq \log_{2}{(\Delta_0 / \epsilon'_1)}$, where $\Delta_0$ is the initial value of $\Delta$. 
Therefore, the required number of iterations decreases by considerable amount, about 10 in our parameter configuration.
This difference accumulates for every candidate and generation of the algorithm, resulting in a significant time saving.

Even if we exclude the influence of the evaluation part with the simplified time slot allocation algorithm, our proposed scheme consists of sorting, mutation rate calculation, and operations that follow $O(1)$ computation time. 
On the other hand, the VAE scheme adopts fully connected layers composed of a number of weights proportional to $N_d^{2}$ and updates weights through back-propagation, which requires considerable complex calculations. 
Note that although other conventional schemes require less computation time than our proposed scheme due to their linear and straightforward computations, resulted rate budgets of conventional schemes are much lower than rate budgets of our proposed scheme. 
Thus, we can validate that our proposed scheme computes relaying topology with a sufficient rate budget in a reasonable computation time.

\section{Conclusions}
In this study, we formulated a system model which includes randomly distributed IoT devices and power beacons considering a TDMA system and energy harvesting. 
We proved the validity of iterative balancing time slot allocation algorithm using the KKT condition. 
We proposed a GMGA which modulates the mutation rate inversely proportional to the link cost for reasonable connection to overcome the limitations of a normal genetic algorithm; an excess of low-performance candidates, such as direct sink connections of low-powered nodes or connections to the distant nodes. 
We also proposed a mobility-aware iterative relaying topology algorithm which starts the calculation from the final result of the previous frame. 
To show the validity of our proposed algorithms, we conducted a stationary simulation environment and a mobility simulation environment. 

However, our research is based on the simulations performed on a system model written in MATLAB. 
Therefore, an implementation considering practical telecommunication conditions such as digital modulation techniques and coding rate could be helpful for the future works to gain further insight on relaying scheme and increase the reliability of the results.
In addition, future works might investigate the application of another deep learning techniques in order to enhance the performance of relaying network.

Nevertheless, the methodology proposed in our study is significant since it provides superiority in quantitative metrics compared to currently existing VAE based method or conventional relaying methods.
We also provided simulation results of our proposed scheme compared with conventional schemes and an exhaustive search scheme, which searches the optimal solution based on brute-force method. 
The simulation results show that the rate budget is increased by 11.75\% compared to VAE scheme when averaging the rate of increase for each data point. Also, the computation time is decreased by 87.70\% compared to VAE scheme when comparing the total computation time between two schemes.
In conclusion, we observed and confirmed that our proposed scheme outperforms other schemes in terms of both performance and computation time, giving us a glimpse of the feasibility of real-world adoption.


\bibliographystyle{IEEEtran}

\begin{IEEEbiography}[{\includegraphics[width=1in,height=1.25in,clip,keepaspectratio]{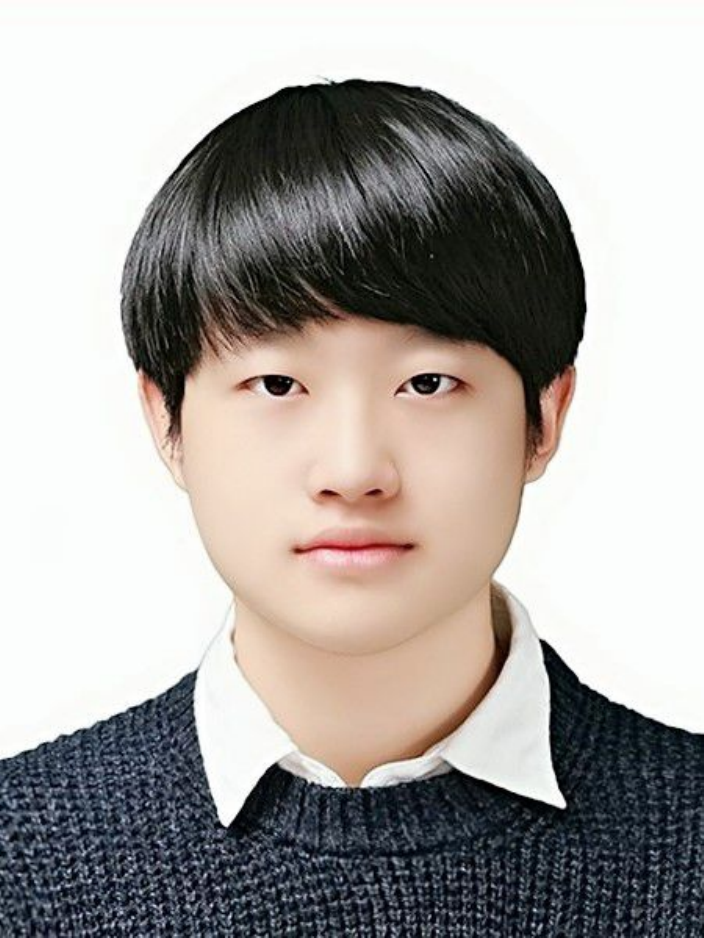}}]{Gyupil Kam} received a B.S degree in Electrical Engineering from Korea Advanced Institute of Science and Technology (KAIST), Daejeon, South Korea, in 2023. Since June 2023, he has been a research officer at the Agency for Defense Development (ADD), South Korea. His research interests include telecommunication, genetic algorithm, internet of things (IoT), wireless networks, and machine learning. 
\end{IEEEbiography}
\begin{IEEEbiography}[{\includegraphics[width=1in,height=1.25in,clip,keepaspectratio]{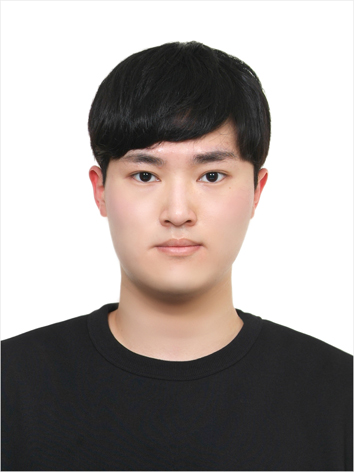}}]{Kiseop Chung} received a B.S degree in Electrical and Computer Engineering from Seoul National University, Seoul, South Korea, in 2022. Since June 2022, he has been a research officer at the Agency for Defense Development (ADD), South Korea. His research interests include internet of things (IoT), wireless networks, unsupervised machine learning, embedded systems, and hardware architecture.
\end{IEEEbiography}

\EOD

\end{document}